\newtheorem{teo}{Theorem}
\newtheorem{propi}[teo]{Proposition}
\newtheorem{lema}[teo]{Lemma}
\newtheorem{coro}[teo]{Corollary}
\newtheorem{defi}{Definition}
\newtheorem{axiom}{Axiom}
\newtheorem{myex}{Example} 
\newenvironment{exa}{\begin{myex}\rm}{\hfill$\vartriangle$\end{myex}}
\newcommand{\argmax}{\operatornamewithlimits{argmax}}
\renewcommand{\leq}{\leqslant}
\renewcommand{\geq}{\geqslant}
\newcommand{\NB}{\mathit{NB}}
\newcommand{\NT}{\mathit{NT}}
\newcommand{\pref}{\succ}
\newcommand{\npref}{\nsucc}
\newcommand{\prof}[1]{\boldsymbol{#1}}
\newcommand{\yes}{$\checkmark$}
\title{\bf Voting with Partial Orders: The Plurality and Anti-Plurality Classes}
\author[1]{Ulle Endriss}
\author[2]{Federico Fioravanti\footnote{federico.fioravanti@univ-st-etienne.fr (corresponding author)}\ }
\affil[1]{Institute for Logic, Language and Computation (ILLC), University of Amsterdam, The Netherlands}
\affil[2]{Universit\'e Jean Monnet Saint-\'Etienne, CNRS, Universit\'e Lyon 2, emlyon business school, GATE, 42023, Saint-\'Etienne, France}
\date{ }
\begin{document}
\maketitle


\begin{abstract}\noindent
In the theory of voting, the Plurality rule for preferences that come in the form of linear orders selects the alternatives most frequently appearing in the first position of those orders, while the Anti-Plurality rule selects the alternatives least often occurring in the final position. We explore extensions of these rules to preferences that are partial orders, offering axiomatic characterisations for them.\\[5pt]
\textbf{Keywords:} Voting, Plurality, Anti-Plurality, Partial Orders.
\end{abstract}

\section{Introduction}\label{sec:intro}

We consider the problem of a group choosing alternatives from a fixed set of finitely many alternatives. 
In a deviation from the model most commonly studied in social choice theory \citep{ArrowEtAlHBSCW2002,ZwickerHBCOMSOC2016}, we focus on situations in which each individual’s preference is a (strict) \emph{partial order}, allowing any given individual, for each pair of alternatives, to report either that she strictly prefers one over the other or that she does not wish to compare the two.
The study of collective decision making on the basis of partial-order preferences deserves attention, as requiring voters to fully rank a (possibly) large number of alternatives may be overly demanding in many circumstances.
Partial orders offer a less restrictive setting, giving voters more flexibility for expressing their true beliefs, while still allowing for the option to express preferences as linear orders for those who wish to do so.

\begin{exa}\label{ex:intro}
Consider the following situation, inspired by an example first discussed by \citet{zoiulleaggreincompprefp84}. 
Suppose we want to run a competition to look for the best mobile app. We ask users to rank different mobile apps, namely {\em Instagram, Facebook, TikTok, Uber, Gmail}, and {\em Yahoo}.
A particular user might be sure that she prefers {\em Instagram} to {\em Facebook}, and {\em Facebook} to {\em TikTok}, as she uses the three of them for posting videos; she also is sure that she prefers {\em Gmail} to {\em Yahoo}; but she might not be able to compare {\em Uber} to the rest of the apps, or {\em Gmail} to {\em Instagram}. 
\begin{center}
\begin{tikzpicture} 
\node (1) {\it Instagram}; 
\node (5) [below of=1] {\it Facebook};
\node (4) [below of=5] {\it TikTok};
\draw[->] (1) -- (5);
\draw[->] (5) -- (4);
\node (6) [right of=1] {};
\node (7) [right of=6] {\it Gmail};
\node (8) [below of=7] {\it Yahoo};
\node (9) [right of=7] {};
\node (10) [right of=9] {\it Uber};
\draw[->] (7) -- (8);
\end{tikzpicture}
\end{center}
So it is clear that {\em Instagram} and {\em Gmail} are among the most preferred apps, and that {\em TikTok} and {\em Yahoo} are the least preferred ones.
But it is less clear whether {\em Uber} should be considered a good or a bad app, as it is not preferred to other alternatives, but there also are no alternatives that are preferred to it. 
Overall, it is not obvious which app should be considered \emph{the} best for this user, as the set of most preferred alternatives has a particular structure (e.g., {\em Instagram} beats two other apps, while {\em Uber} has not been compared to any other app).
What is even less clear is how the preferences of multiple users should be aggregated in a situation like this.
\end{exa}

\noindent
For the standard model of social choice, in which voters report (strict) linear orders, the most widely used rule in practice is the Plurality rule, which selects the alternative(s) most frequently ranked at the top of an individual preference order. 
While the Plurality rule has been rightfully criticised in the literature for its shortcomings \citep{Laslier2011}, it does have the significant advantage of being particularly simple: every voter simply awards a point to her most preferred alternative. 
Can we preserve some of this simplicity when moving from linear to partial orders? The answer is not obvious, as an individual who only reports a partial order might not have a uniquely identified most preferred alternative. 

If we stick to the idea that an individual should contribute more to the selection of her top alternatives than to the selection of others, we arrive at what we shall call the \emph{Plurality Class} of voting rules for partial orders. 
For instance, in the mobile app example, \emph{Instagram}, \emph{Gmail}, and \emph{Uber} are all top alternatives (in the sense that there are no alternatives more preferred to them). 
One seemingly natural rule might assign 1 point to each of them (and 0 to all others), while another might assign each top alternative as many points as the number of alternatives it dominates in that preference (and still 0 to all non-top alternatives).
Both of these rules belong to the Plurality Class, and both reduce to the standard Plurality rule for the special case of linear orders.

The core contribution of this paper is the identification and axiomatic characterisation of the Plurality Class and the closely related Anti-Plurality Class, which extends the well-known Anti-Plurality rule (also referred to as the Veto rule) for linear orders. 
In the standard framework, the Anti-Plurality rule selects the alternative(s) that are least frequently ranked last. 
We further provide characterisations of several important subclasses, including, in particular, the most natural interpretations of the Plurality and Anti-Plurality rules within this setting.

To characterise these two classes, we adapt many previously introduced axioms to our setting, while also proposing new ones.
The standard Plurality and Anti-Plurality rules belong to the family of positional scoring rules, which are rules that assign scores to alternatives based on their positions in the linear orders, and then choose the alternative(s) that maximise the sum of the scores across all voters.
Since \citet{young1975} characterised the scoring rules as the only ones satisfying Anonymity, Neutrality, Reinforcement, and Continuity, many different characterisations of the Plurality rule and the Anti-Plurality rule for linear orders have been obtained.
Let us briefly mention some of the characterisations of the Plurality rule that can be found in the literature:\footnote{Definitions of these axioms can be found in Section~\ref{sec:model} and in Section~\ref{sec:results}, except for Reduction, which says that Pareto-dominated alternatives can be deleted; Monotonicity, which requires that if a voter improves the position of an already winning alternative, keeping everything else fixed, then that alternative becomes the only winner; Bottom- and Top-Invariance, which imply that reshuffling losing alternatives below or above, respectively, the winning alternative, does not change the outcome; and Minimal Veto, which says that every minority group should be assigned veto power.} Anonymity, Neutrality, Reinforcement, Continuity, and Reduction \citep{richelsonplurality1978};\footnote{\citet{ching1996simple} later showed that Continuity, in fact, is not necessary for this characterisation.} Anonymity, Neutrality, Reinforcement, Monotonicity, and Bottom-Invariance \citep{Merlin1999ImplementationOS}; and Anonymity, Neutrality, Reinforcement, Faithfulness, and Tops-Only \citep{sekiguchi2012characterization}.
Moreover, \citet{llamazares2015scoring} show that within the scoring rules that award one point to the first $k$ alternatives, Plurality is the only one that respects the Pareto principles. 
For the case of the Anti-Plurality rule, we can mention the following characterisations for linear orders: Anonymity, Neutrality, Reinforcement, Monotonicity, and Top-Invariance \citep{BARBERA198249}; Anonymity, Neutrality, Reinforcement, Continuity, and Minimal Veto \citep{baharad2005inverse}; and Anonymity, Neutrality, Reinforcement, Averseness, and Bottoms-Only \citep{KURIHARA2018110}.

While we are not aware of any attempts to characterise Plurality-like rules for partial orders, there is a growing literature that recognises the practical significance of partial orders in the context of social choice \citep{KonczakLangMPREF2005,TerzopoulouPhD2021}. 
For instance, \citet{pinipopref2005} generalise both Arrow's Theorem \citeyearpar{Arrow1951-ARRIVA} and the Muller-Satterthwaite Theorem \citeyearpar{MULLER1977412} to the case of partial orders. \citet{zoiulleaggreincompprefp84} consider the problem of aggregating incomplete pairwise preferences where the voters have a weight that depends on the number of alternatives they rank.
\citet{cullinan2014borda} characterise a particularly natural adaptation of the Borda rule to election scenarios where preferences are partial orders.
Some work has focused on specific classes of partial orders, notably the top-truncated orders, where an individual strictly ranks her most preferred alternatives and it is assumed that she is indifferent between the remaining alternatives.
In this setting, \citet{baumesteirlazyvoters2012} provide complexity results on manipulation and bribery, while \citet{terzopoulou2021borda} characterise different variants of the Borda rule.

A related framework from which we draw some inspiration, and to which some of our results can be applied, is the setting of approval voting \citep{bramsfishburnapporvalvoting1978}, where ballots merely indicate which alternatives a voter approves of. 
This framework has been the subject of extensive axiomatic analysis, with numerous characterisations of various approval voting rules available in the literature.
For instance, \citet{BRANDL2022105532} provide eight distinct characterisations of the standard approval voting rule, while \citet{ALCALDEUNZU20091187} and \citet{BARDAL2025345} focus on approval-based scoring rules, where a voter's weight depends on the size of the approval set they report. \citet{aragones2011making} show that, in certain environments, approval voting provides stronger incentives for participation than plurality voting.

The remainder of this paper is organised as follows.
Section~\ref{sec:model} presents the model of voting we study, introduces a number of voting rules for this setting, and defines and motivates the axioms we shall use.
Section~\ref{sec:results} presents our characterisation results for the two families of voting rules we focus on, inspired by the standard Plurality and Anti-Plurality rule. It also comments on a few applications of our results to approval voting.
Section~\ref{sec:conclusion} concludes.

\section{The Model}\label{sec:model}

In this section, we present the model of voting with partial orders that we study. 
It is a model of \emph{variable electorates}, meaning that any finite subset of an infinite universe of voters might cast a ballot in any given election. 
Our model is one of the standard models of voting commonly studied in the literature \citep{ArrowEtAlHBSCW2002}, except that individual preferences are assumed to be partial orders rather than linear orders.

We also recall the definition of the family of positional scoring rules for this model and then introduce the Plurality Class and the Anti-Plurality Class of voting rules belonging to this family. Finally, we propose a number of axioms for voting with partial orders.

\subsection{Voting with Partial Orders}

Let $A$ be a finite set of \emph{alternatives}, with $|A|=m\geq 3$.
A \emph{preference} is a (strict) partial order~$\pref$ over the set of alternatives, i.e., an irreflexive, antisymmetric, and transitive (but not necessarily complete) binary relation on~$A$. We use $\mathcal{D}$ to denote the \emph{domain} of all such preferences.
For $a,b\in A$, we say that $a$ is (strictly) \emph{preferred} to $b$ in case $a\pref b$. 
If neither $a\pref b$ nor $b\pref a$ holds, we say that $a$ is \emph{incomparable} to $b$.
We use $\pref^{ab}$ to denote the preference $\pref$ where alternative $a$ is relabelled as $b$ and alternative $b$ is relabelled as $a$, while everything else remains the same.
The \emph{top} of a preference $\pref$ is the set of alternatives not dispreferred to any other alternative: 
\begin{eqnarray*}
T(\pref) & = & \{\, a\in A\mid x \npref a\text{ for all }x\in A \,\}.
\end{eqnarray*}
The {\em non-top} of a preference $\pref$ is the set $\NT(\pref)=A\setminus T(\pref)$.
Analogously, the {\em bottom} of $\pref$ is the set of alternatives not preferred to any other alternative:
\begin{eqnarray*}
B(\pref) & = & \{\, b\in A\mid b\npref x\text{ for all }x\in A \,\}.
\end{eqnarray*}
The {\em non-bottom} of $\pref$ is the set $\NB(\pref)=A\setminus B(\pref)$.
Note that it is not necessarily the case that $T(\pref)\cap B(\pref)=\emptyset$. 
For instance, in Example~\ref{ex:intro}, alternative \textit{Uber} belongs to both the top and the bottom.

We consider elections in which a finite electorate $N\subset\mathbb{N}$ of \emph{agents} (or \emph{voters}) participate.
Each agent $i\in N$ casts her vote by reporting a preference ${\pref_i} \in \mathcal{D}$, giving rise to a \emph{profile} of individual preferences. Formally, a profile for the electorate~$N$ is a function from $N$ to $\mathcal{D}$, but for notational convenience we will denote profiles as vectors ${\prof{\pref}} = (\pref_{i})_{i\in N}$. We write $\mathcal{D}^N$ for the set of all such profiles.
We use $\prof{\pref}_{-i}$ to denote the profile where the preference of agent~$i$ in profile $\prof{\pref}$ has been omitted;
and we use $({\pref'}_i,\prof{\pref}_{-i})$ to denote the profile we obtain when we replace the preference of agent~$i$ in $\prof{\pref}$ with ${\pref'}_i$. 
Given two profiles ${\prof{\pref}}\in\mathcal{D}^N$ and ${\prof{\pref}'} \in\mathcal{D}^{N'}$, with $N\cap N'=\emptyset$, we write $(\prof{\pref},\prof{\pref}')\in\mathcal{D}^{N\cup N'}$ for their concatenation, i.e., the profile in which both groups report their preferences.
Given a preference ${\pref}\in\mathcal{D}$, we use $\overline{\prof{\pref}}$ to denote the profile consisting of $m!$ preferences with the same graph structure as $\pref$, where each alternative appears in every position exactly once.\footnote{As we are going to be interested in rules where the names of the voters are not relevant, this profile is well defined.}

Finally, a \emph{voting rule} is a function $F : \bigcup_{N\subset\mathbb{N}}\mathcal{D}^N \to 2^A\setminus\{\emptyset\}$, mapping any given profile of preferences (for any finite electorate $N\subset\mathbb{N}$) to a non-empty set of alternatives, the \emph{winners} of the election in question. That is, we are working with voting rules that might return several alternatives that are tied for winning the election.
Next, we introduce some concrete examples for specific definitions of voting rules in this setting.

\subsection{Specific Classes of Voting Rules}\label{sec:votingrules}

For the standard model of voting with linear orders, the class of \emph{positional scoring rules} \citep{young1975,ZwickerHBCOMSOC2016} includes some of the best-known and most widely used rules, such as the Borda rule and the Plurality rule. Following \citet{kruger2020strategic}, we now generalise the definition of this class to our model of voting with partial orders.

A \emph{scoring function} $s : A \times \mathcal{D} \to \mathbb{R}$ maps any given alternative $a\in A$ in a given preference ${\pref} \in \mathcal{D}$ to a real number $s_\pref(a)$, which we call the \emph{score} of~$a$. We are specifically interested in scoring functions that are positional, meaning that the score of an alternative only depends on its position in the given preference. 
Formally, a scoring function~$s$ is \emph{positional} if, for all permutations $\sigma : A \to A$, preferences ${\pref}\in\mathcal{D}$, and alternatives $a\in A$, it is the case that $s_\pref(a) = s_{\sigma(\pref)}(\sigma(a))$, where $\sigma(\pref)=\{\,(\sigma(a),\sigma(b))\mid a\pref b\,\}$. 

Intuitively, a positional scoring function assigns scores to positions in a graph.
Three well-known positional scoring functions for linear orders are the Borda scoring function, giving $m-1$ points to the first alternative, $m-2$ points to the second alternative, and so forth; the Plurality scoring function, assigning one point to the first alternative and zero to the rest; and the Anti-Plurality scoring function, assigning zero points to all the alternatives except for the last-ranked one, which gets $-1$ point.

For every positional scoring function $s$ there is an associated \emph{positional scoring rule} $F_s$, which is the voting rule defined as follows: 
\begin{eqnarray*}
F_s(\prof{\pref}) & = & \argmax_{a\in A}\sum_{i\in N}s_{\pref_i}(a)\quad
\text{for any profile}\ {\prof{\pref}}\in \mathcal{D}^N.
\end{eqnarray*}
Note that applying an affine transformation to a scoring function does not change the induced voting rule.%
\footnote{Applying an affine transformation $x \mapsto \alpha x + \beta$ with $\alpha \in \mathbb{R}_{>0}$ and $\beta \in \mathbb{R}$ to a scoring function~$s$ means replacing, for every preference ${\pref} \in \mathcal{D}$ and every alternative $a\in A$, the original score $s_{\pref}(a)$ with the new score $\alpha\cdot s_{\pref}(a) + \beta$.}
Indeed, while an affine transformation changes the scores assigned to each alternative, it does not affect which alternative receives the highest score and thus wins the election.
Therefore, if $s'$ is an affine transformation of $s$, then $F_s = F_{s'}$, which implies that every positional scoring rule $F_s$ admits infinitely many equivalent representations.

While the idea of positional scoring rules being invariant under affine transformations is familiar from the classical model of voting, in our model, we can furthermore apply what we shall call \emph{additive shifts} to the scores associated with \emph{individual} preferences without affecting outcomes. 
Applying an additive shift of $k$ to a scoring function~$s$ for a subset of the preferences in $\mathcal{D}$ means adding $k$ to the scores of all alternatives under the preferences in the selected set (but not under any of the other preferences). To ensure the scoring function remains positional, the selected set of preferences must be closed under the renaming of alternatives, so all of these preferences must share the same graph structure.\footnote{Observe that in the classical model of voting, where all preferences are linear orders, \emph{all} preferences share the same graph structure. Therefore, in the classical model, the ability to apply additive shifts to individual preferences of the same graph structure is just a special case of the ability to apply affine transformations to the full scoring function. Only in our richer model do these two types of operation actually come truly apart.} 
Since every alternative's score increases by $k$ whenever a voter chooses one of the preferences affected, the aggregate scores shift uniformly and the set of winners remains unchanged. 
The flexibility of being able to apply additive shifts to individual preferences (as long as we respect the constraints imposed by the need to be positional) allows us to normalise scores within a preference to meet specific conditions required for our arguments. This will serve as a key tool in several of our proofs. 

\begin{exa}
Consider the following profile, in which two agents report preferences regarding three alternatives:
\begin{center}
\begin{tikzpicture} 
\node (a) {$a$}; 
\node (b) [below left of=a] {$b$};
\node (c) [below right of=a] {$c$};
\draw[->] (a) -- (b);
\draw[->] (a) -- (c);
\end{tikzpicture}
$\qquad$
\begin{tikzpicture} 
\node (a) {$a$}; 
\node (b) [above left of=a] {$b$};
\node (c) [above right of=a] {$c$};
\draw[->] (b) -- (a);
\draw[->] (c) -- (a);
\end{tikzpicture}\end{center}
A possible way in which one might generalise the Borda scoring function to the case of partial orders would be to define $s_{\pref}(a)$ as the cardinality of the set $\{\, x \in A \mid a \pref x \,\}$, i.e., by equating the Borda score of an alternative~$a$ with the number of other alternatives dominated by~$a$.
Under this rule, for the above profile, alternative~$a$ wins the election with a score of~2, while $b$ and $c$ each only receive a score of~1.

One way of generalising the Plurality scoring function to our setting would be to set $s_{\pref}(a)$ to 1 if $\{\, x \in A\mid x\pref a \,\}$ is empty, and to set it to 0 otherwise. Under the corresponding rule, we obtain a three-way tie.

Finally, if we consider, say, the affine transformation $x \mapsto 4 x - 1 $, under the Borda-like rule $a$ gets $6$ points, while $b$ and $c$ get $2$ points each; and under the Plurality-like rule $a$, $b$, and $c$ get $2$ points each.
Or we can apply an additive shift of $2$ for the first preference only. Then under the Borda-like rule $a$ gets $4$ points, while $b$ and $c$ get $3$ points each; and under the Plurality-like rule $a$, $b$, and $c$ get $3$ points each.
So the winners indeed do not change.
\end{exa}

\noindent
But the particular way of generalising the Plurality rule suggested in this example is just one of many options. 
We now propose a class of positional scoring rules for partial orders that, we believe, naturally captures the range of options available for generalising the Plurality rule for linear orders.

\begin{defi}\label{def:PluralityClass}
A positional scoring rule $F_s$ induced by a scoring function~$s$ belongs to the \textbf{Plurality Class} if $s$ satisfies the following property: For any given preference ${\pref}\in\mathcal{D}$, there exists a $k'\in\mathbb{R}$ such that for all $a\in T(\pref)$ and $b\in \NT(\pref)$ it is the case that $s_\pref(a)\geq s_\pref(b)=k'$, with this inequality being strict for at least one choice of $a$ and~$b$.
\end{defi}

\noindent
Thus, we require that all the alternatives not in the top receive the same score $k'$, and that score must not be greater than the scores awarded to the alternatives in the top. But we are free to award different scores to different top alternatives. 
This way, we are able to exploit the preference's top structure and, for example, award points to a top alternative according to the number of alternatives it dominates. 
For simplicity, and to highlight the clear analogy to the Plurality rule for linear orders, we are often going to assume that $k'=0$ for any preference ${\pref}\in\mathcal{D}$.\footnote{This can be done without loss of generality, given that---as noted earlier already---an additive shift can be applied to every preference relation individually.} 

We stress once more that for two partial orders with different graph structures, the scoring function can assign different scores to the alternatives.
The following example illustrates this point.

\begin{exa}
Let $s$ be the scoring function that assigns to any alternative in the top a score that is equal to the number of alternatives it dominates, while to every alternative not in the top it assigns the score of~$0$. 
Here are the scores assigned to the alternatives for two possible preferences:

\begin{center}
\begin{tikzpicture}
\node (1) {$a$\,\color{blue}{$3$}};
\node (2) [below right of=1] {$c$\,\color{red}{$0$}};
\node (3) [below left of=1] {$b$\,\color{red}{$0$}};
\node (4) [above right of=2] {$e$\,\color{blue}{$1$}};
\node (5) [below of=3] {$d$\,\color{red}{$0$}};
\draw[->] (4) -- (2);
\draw[->] (1) -- (3);
\draw[->] (1) -- (2);
\draw[->] (3) -- (5);
\end{tikzpicture}
\qquad
\begin{tikzpicture}
\node (1) {$a$\,\color{blue}{$2$}};
\node (2) [below right of=1] {$c$\,\color{red}{$0$}};
\node (3) [above right of=2] {$b$\,\color{blue}{$2$}};
\node (4) [below right of=2] {$d$\,\color{red}{$0$}};
\node (5) [right of=3] {$e$\,\color{blue}{$1$}};
\draw[->] (1) -- (2);
\draw[->] (3) -- (2);
\draw[->] (2) -- (4);
\draw[->] (5) -- (4);
\end{tikzpicture} 
\end{center}
The positional scoring rule~$F_s$ induced by this scoring function~$s$ belongs to the Plurality Class.
\end{exa}

\noindent
This example can be viewed as a hybrid of the Borda and Plurality scoring rules: the Borda rule, where alternatives receive points based on the number of dominated options, and the Plurality rule, where a voter contributes only to her most preferred alternatives. 
%
But we can also take the stance that all alternatives that are part of the top set in a voter's preference should receive the same score. Thus, we consider the following class.

\begin{defi}
A positional scoring rule $F_s$ in the Plurality Class induced by a scoring function~$s$ belongs to the \textbf{Simple Plurality Class} if $s$ satisfies the following property: For any given preference ${\pref}\in\mathcal{D}$, there exists a $k\in\mathbb{R}$ such that $s_\pref(a)=k$ for all $a\in T(\pref)$.   
\end{defi}

\noindent
Thus, $F_s$ belongs to the Simple Plurality Class if, for any given preference ${\pref}\in\mathcal{D}$, there exist $k,k'\in\mathbb{R}$ such that $k=s_\pref(a) > s_\pref(b) = k'$ for all $a\in T(\pref)$ and $b\in \NT(\pref)$. 

Observe that for rules in this class, we still can make $k$, the score awarded to each alternative in a given top set, dependent on the preference~$\pref$ at hand. 
For example, we could consider a rule that awards $2$~points to alternatives alone in the top; $2$~points to alternatives in the top together with all but one of the other alternatives; $1$~point to alternatives in the top in all other cases; and $0$~points to alternatives not in the top. This rule rewards and punishes alternatives in extreme positions: when only one alternative is in the top, this can be deemed without any doubt the best alternative, so it deserves to obtain $2$~more points that the rest; while when there is only one alternative not in the top, that one alternative can without any doubt be deemed the worst one, so it deserves to obtain $2$~points less than the rest.


While the use of such a rule can be defended in some situations, in others we might want to insist that the score an alternative earns for being placed in the top should go up (relative to the score those outside of the top receive) as the number of alternatives it shares that top spot with goes down. This suggests yet another narrowing of our class of voting rules.

\begin{defi}
A positional scoring rule $F_s$ in the Simple Plurality Class induced by a scoring function~$s$ belongs to the \textbf{Monotonic Simple Plurality Class} if $s$ satisfies the following property: For any two preferences ${\pref},{\pref'}\in\mathcal{D}$ and any four alternatives $a\in T(\pref)$, $b\in \NT(\pref)$, $a'\in T(\pref')$, and $b'\in \NT(\pref')$ it is the case that $|T(\pref)| \leq |T(\pref')|$ entails $s_\pref(a) - s_\pref(b) \geq s_{\pref'}(a')-s_{\pref'}(b')$.
\end{defi}

\noindent
Restricting the scoring function even further, let us now focus on the positional scoring rules in the Plurality Class that assign the same score to all the top alternatives.  
Such a Plurality rule is \emph{uniform} in the sense that the score assigned to a top alternative does not depend on the structure of the preference but is the same across all possible preferences, and the same is true for the alternatives that are not in the top.
Note that, in view of what we said earlier about positional scoring rules being invariant under affine transformations, there is in fact only a single rule in the Plurality Class that is uniform---but it can be represented by infinitely many different scoring functions. We call this unique rule the \emph{Uniform Plurality} rule.
The canonical way of defining it is as the rule that assigns 1~point to every alternative in the top and 0~points to all other alternatives. 

\begin{defi}
The \textbf{Uniform Plurality} rule $F_s$ is the rule belonging to the Monotonic Simple Plurality Class that is induced by the scoring function $s$ with $s_{\pref}(a) = 1$ and $s_{\pref}(b) = 0$ for all preferences ${\pref} \in \mathcal{D}$, alternatives $a \in T(\pref)$, and alternatives $b \in \NT(\pref)$.
\end{defi}

\noindent
The Uniform Plurality rule might be regarded as the most natural extension of the Plurality rule for linear orders to our model of voting with partial orders, and it is clearly appealing due to its simplicity.
It captures the most basic intuition about the Plurality rule for linear orders, selecting the alternatives that most frequently appear in the top positions of the ballots. Having said this, focusing \emph{only} on the Uniform Plurality rule would not do justice to the rich variety in which voters can express themselves in our model of voting with partial orders; so certain other rules belonging to the Plurality Class also have their place.

We now move on to defining corresponding classes of positional scoring rules for partial orders that generalise the Anti-Plurality rule for linear orders. 

\begin{defi}
A positional scoring rule $F_s$ induced by a scoring function~$s$ belongs to the \textbf{Anti-Plurality Class} if $s$ satisfies the following property: For any given preference ${\pref}\in\mathcal{D}$, there exists a $k\in\mathbb{R}$ such that for all $a\in \NB(\pref)$ and $b\in B(\pref)$ it is the case that $k=s_\pref(a)\geq s_\pref(b)$, with this inequality being strict for at least one choice of $a$ and~$b$.
\end{defi}

\noindent 
Again, for simplicity, and to highlight the clear analogy to the Anti-Plurality rule for linear orders, we are often going to assume that $k=0$.

\begin{defi}
A positional scoring rule $F_s$ in the Anti-Plurality Class induced by a scoring function~$s$ belongs to the \textbf{Simple Anti-Plurality Class} if $s$ satisfies the following property: For any given preference ${\pref}\in\mathcal{D}$, there exists a $k'\in\mathbb{R}$ such that $s_\pref(b)=k'$ for all $b\in B(\pref)$.
\end{defi}

\begin{defi}
A positional scoring rule $F_s$ in the Simple Anti-Plurality Class induced by a scoring function~$s$ belongs to the \textbf{Monotonic Simple Anti-Plurality Class} if $s$ satisfies the following property: For any two preferences ${\pref},{\pref'}\in\mathcal{D}$ and any four alternatives $b\in B(\pref)$, $c\in \NB(\pref)$, $b'\in B(\pref')$, and $c'\in \NB(\pref')$, it is the case that  $|B(\pref)| \leq |B(\pref')|$ entails $s_\pref(c) - s_\pref(b) \geq s_{\pref'}(c') - s_{\pref'}(b')$.
\end{defi}

\noindent
Thus, the alternatives that are part of the smaller bottom set receive the lower scores, while for the Monotonic Simple Plurality Class the alternatives in the smaller top set receive the higher scores. 
We again identify a single most canonical rule for this class.

\begin{defi}
The \textbf{Uniform Anti-Plurality} rule $F_s$ is the rule belonging to the Monotonic Simple Anti-Plurality Class that is induced by the scoring function $s$ with $s_{\pref}(a) = 0$ and $s_{\pref}(b) = -1$ for all preferences ${\pref} \in \mathcal{D}$, alternatives $a \in \NB(\pref)$, and alternatives $b \in B(\pref)$.
\end{defi}

\noindent
Again, this rule captures the most basic intuition about the Anti-Plurality rule for linear orders, selecting those alternatives that least often appear in the bottom of a preference.

\subsection{Axioms}\label{sec:axioms}

Next, we introduce a number of axioms that each encode a normative requirement that, one might argue, any reasonable voting rule should satisfy.
%
The first four axioms, Anonymity, Neutrality, Reinforcement, and Continuity, are very common in the literature on scoring rules and are used by \citet{young1975} to characterise them.
The first two are symmetry requirements that essentially state that the names of the agents and alternatives should not be considered important for choosing a winner. 
We formulate Anonymity in terms of a bijection $\mu:N\rightarrow N'$ for any two electorates $N,N'\subset\mathbb{N}$ with the same number of agents. Similarly, we formulate Neutrality in terms of a permutation $\sigma:A\rightarrow A$ on the set of alternatives, which extends to both sets $S\subseteq A$ of alternatives and preferences ${\pref} \in \mathcal{D}$ in the natural manner:
$\sigma(S) = \{\,\sigma(x) \mid x\in S\,\}$ and $\sigma(\pref)=\{\,(\sigma(x),\sigma(y))\mid x\pref y\,\}$.

\begin{axiom}[Anonymity]
For any two electorates $N,N'\subset\mathbb{N}$ with the same number of agents and any bijection $\mu:N\rightarrow N'$, it should be the case that \mbox{$F((\pref_i)_{i\in N})$} = \mbox{$F((\pref_{\mu(i)})_{i\in N})$}.  
\end{axiom}

\begin{axiom}[Neutrality] 
For any profile ${\prof{\pref}} = (\pref_i)_{i\in N}$ and any permutation $\sigma:A\rightarrow A$, it should be the case that $\sigma(F((\pref_i)_{i\in N})) = \mbox{$F((\sigma(\pref_i))_{i\in N})$}$.
\end{axiom}

\noindent
Thus, Anonymity says that changing the names of the voters should not change the winners, while Neutrality says that any change in the names of the alternatives in the profile should then also be applied to the original winners to obtain the new winners.

Reinforcement, which is also known as Consistency, states that if two disjoint electorates agree on some winning alternatives, then those alternatives must be selected when we consider the profile where the members of both electorates cast their votes.
 
\begin{axiom}[Reinforcement] 
For any two profiles ${\prof{\pref}}\in\mathcal{D}^N$ and ${\prof{\pref}'}\in\mathcal{D}^{N'}$ with $N\cap N'=\emptyset$ such that
$F(\prof{\pref})\cap F(\prof{\pref}')\neq\emptyset$, it should be the case that $F(\prof{\pref},\prof{\pref}')=F(\prof{\pref})\cap F(\prof{\pref}')$.
\end{axiom}

\noindent
Continuity, alternatively called the Archimedean property by \citet{smith1973aggreg} and \citet{young1975} and Overwhelming Majority by \citet{myerson1995axiomatic}, stipulates that a small number of agents cannot completely override the decision of a large majority (although they might still break ties in favour of some of the alternatives selected by the majority). 

\begin{axiom}[Continuity] 
For any two profiles ${\prof{\pref}}\in\mathcal{D}^N$ and ${\prof{\pref}'}\in\mathcal{D}^{N'}$ such that $N\cap N'=\emptyset$, there should exist a bound $K$ such that, for every natural number $k>K$, the following inclusion holds: 
\begin{eqnarray*}
F(\underbrace{\prof{\pref},\cdots,\prof{\pref}}_{k\ \text{times}},\prof{\pref}')& \subseteq & F(\prof{\pref}),
\end{eqnarray*} 
where, to assign identities to the voters of the duplicated profile, we use the smallest numbers in the set $\mathbb{N}\setminus\{N\cup N'\}$ of numbers not yet used. 
\end{axiom}

\noindent
Let us briefly consider an example to clarify the notation involved.
\begin{exa}
Let ${\prof{\pref}}\in\mathcal{D}^N$ and ${\prof{\pref}'}\in\mathcal{D}^{N'}$, with $N=\{2,4\}$ and $N'=\{1,7,9\}$.
Then $(\prof{\pref},\prof{\pref},\prof{\pref},\prof{\pref}')=(\pref_2,\pref_4,\pref_3,\pref_5,\pref_6,\pref_8,\pref'_1,\pref'_7,\pref'_9)$, with ${\pref_2}={\pref_3}={\pref_6}$ and ${\pref_4}={\pref_5}={\pref_8}$.
We stress that, as we are going to be interested only in voting rules that are anonymous, the precise way of picking voter identities for the copies of $\prof{\pref}$ is irrelevant.
\end{exa}

\noindent
In the sequel, we are going to write $k{\prof{\pref}}$ as a shorthand for the profile $(\prof{\pref}, \cdots, \prof{\pref})$ consisting of a concatenation of $k$ copies of the profile~$\prof{\pref}$.

The next two axioms are minimal responsiveness and efficiency requirements for a voting rule.
The first one is a version of the Faithfulness axiom due to \citet{YOUNG197443}, and the second one is a version of Averseness introduced by \citet{KURIHARA2018110}.
The former states that, for the special case of single-voter profiles, a voting rule must select alternatives that are in the top of that agent.
Thus, dominated alternatives are not selected.
The latter states that for such profiles, a rule should not select \emph{all} of the bottom alternatives. So, not all dominated alternatives are selected.

\begin{axiom}[Partial Faithfulness]
For any single-voter profile ${\pref}\in\mathcal{D}$, it should be the case that \mbox{$F(\pref)\subseteq T(\pref)$}.  
\end{axiom}

\begin{axiom}[Partial Averseness]
For any single-voter profile ${\pref}\in\mathcal{D}$ with $B(\pref)\neq A$, it should be the case that \mbox{$B(\pref)\nsubseteq F(\pref)$}.  
\end{axiom} 

\noindent
Note that Partial Averseness is a weaker requirement than Partial Faithfulness, in the formal sense of being implied by the latter.
Now we present stronger versions of these axioms.
\begin{axiom}[Faithfulness]
For any single-voter profile ${\pref}\in\mathcal{D}$, it should be the case that \mbox{$F(\pref)= T(\pref)$}.  
\end{axiom}

\begin{axiom}[Averseness]
For any single-voter profile ${\pref}\in\mathcal{D}$ with $B(\pref)\neq A$, it should be the case that \mbox{$B(\pref)\cap F(\pref)=\emptyset$}.  
\end{axiom}

\noindent
It is easy to see that in the case where preferences are linear orders, the partial and non-partial versions of these axioms coincide.

Our remaining axioms are all inspired by the approval voting framework introduced by \citet{bramsfishburnapporvalvoting1978}. 
We adapt axioms from \citet{ALCALDEUNZU20091187} and \citet{BARDAL2025345}, and introduce new ones tailored to situations where one might want to use an Anti-Plurality rule. 
The next two axioms, inspired by the Congruity axiom of \citeauthor{ALCALDEUNZU20091187}, can be interpreted as agreement properties: 
if we add to a given electorate additional voters who, in an immediate sense, agree with the collective choice of that electorate, then the new collective choice should not be altered in any significant way as far as this agreement is concerned.
More specifically, T-Congruity states that adding to an electorate that does not elect~$x$ additional voters who do not rank $x$ in the top should result in a new electorate that still does not elect~$x$.
Similarly,  B-Congruity states that adding to an electorate that elects~$x$ additional voters who do not rank $x$ at the bottom of their preferences should not result in $x$ now losing.

\begin{axiom}[T-Congruity]
Let $x\in A$, ${\prof{\pref}}\in\mathcal{D}^N$, and ${\prof{\pref}'}\in\mathcal{D}^{N'}$ be such that $N\cap N'=\emptyset$, $x\notin F(\prof{\pref})$, and $x\notin T(\pref'_i)$ for all $i\in N'$.
Then $x\notin F(\prof{\pref},\prof{\pref}')$.    
\end{axiom}

\begin{axiom}[B-Congruity]
Let $x\in A$, ${\prof{\pref}}\in\mathcal{D}^N$, and ${\prof{\pref}'}\in\mathcal{D}^{N'}$ be such that $N\cap N'=\emptyset$, $x\in F(\prof{\pref})$, and $x\notin B(\pref'_i)$ for all $i\in N'$.
Then $x\in F(\prof{\pref},\prof{\pref}')$.     
\end{axiom}

\noindent
Our last four axioms capture the idea that elected alternatives should not be harmed when a voter withdraws support from other alternatives. We distinguish between two cases according to the position, top or bottom, of the alternatives being removed. 
Specifically, if an individual weakens her support for a non-elected alternative, either by dropping it from her top or by adding it to her bottom, then any already-elected alternative she supports must remain elected.
Following the naming conventions in the work of \citet{ALCALDEUNZU20091187} and \citet{BARDAL2025345}, we call the first of these axioms Contraction. It imposes constraints on situations where we contract the top set. In analogy to this choice of name, we call the other type of axiom Expansion, given that it imposes constraints when we expand the bottom set. For both axioms, we formulate both a weak and a strong variant.


\begin{axiom}[Contraction]
Let ${\prof{\pref}}\in\mathcal{D}^N$ and ${\pref'_i}\in\mathcal{D}$ be such that \mbox{$T(\pref'_i)\subseteq T(\pref_i)$}.
Then \mbox{$F(\prof{\pref})\cap T(\pref'_i)\subseteq F(\pref'_i,\prof{\pref}_{-i})$}.  
\end{axiom}

\begin{axiom}[Strong Contraction]
Let ${\prof{\pref}}\in\mathcal{D}^N$ and ${\pref'_i}\in\mathcal{D}$ be such that \mbox{$T(\pref'_i)\subseteq T(\pref_i)$}. 
Then \mbox{$F(\prof{\pref})\setminus (T(\pref_i)\setminus T(\pref_i')) \subseteq F(\pref'_i,\prof{\pref}_{-i})$}.
\end{axiom}

\begin{axiom}[Expansion]
Let ${\prof{\pref}}\in\mathcal{D}^N$ and ${\pref'_i}\in\mathcal{D}$ be such that \mbox{$B(\pref_i)\subseteq B(\pref'_i)$}.
Then \mbox{$F(\prof{\pref})\cap B(\pref_i)\subseteq F(\pref'_i,\prof{\pref}_{-i})$}.    
\end{axiom}

\begin{axiom}[Strong Expansion]
Let ${\prof{\pref}}\in\mathcal{D}^N$ and ${\pref'_i}\in\mathcal{D}$ be such that \mbox{$B(\pref_i)\subseteq B(\pref'_i)$}. 
Then \mbox{$F(\prof{\pref})\setminus (B(\pref'_i)\setminus B(\pref_i)) \subseteq F(\pref'_i,\prof{\pref}_{-i})$}.
\end{axiom}

\noindent
We can see that each of the basic two axioms is implied by its strong counterpart by considering the identity $C\setminus(B\setminus A)=(C\cap A)\cup (C\setminus B)$.
This identity also points to the conceptual difference between versions: the strong ones state that the only alternatives that might be harmed, if any, are the ones that got removed from the top (for Strong Contraction) or added to the bottom (for Strong Expansion), while the weaker ones state that any alternative that is not in the contracted top (for Contraction) or within the expanded bottom (for Expansion) might be harmed.

\section{Results}\label{sec:results}

In this section, we provide axiomatic characterisations of the classes of voting rules defined in Section~\ref{sec:votingrules} in terms of the axioms put forward in Section~\ref{sec:axioms}.

For the class of positional scoring rules, it is well-understood that they can be characterised in terms of Anonymity, Neutrality, Reinforcement, and Continuity, and results of this kind have been obtained for a number of different models. 
Originally, \citet{smith1973aggreg} and \citet{young1975} proved this to be the case for the classical model of voting with linear orders. 
Our model is closest to that of \cite{kruger2020strategic}, who obtained the same kind of characterisation for the case of voting with incomplete preferences that are merely assumed to be acyclic (but not necessarily transitive).
Although their result does not directly imply the corresponding result for our domain (which is a subdomain of the domain considered by Kruger and Terzopoulou),\footnote{It might, at first, seem counterintuitive that a characterisation result established for one domain does not immediately transfer to its subdomains, but there are numerous counterexamples for such transfers in the literature. 
One example is the Gibbard-Satterthwaite Theorem \citep{gibbard1973manipulation,satterthwaite1975strategy}. On the full domain of linear preferences, it establishes a characterisation of the family of dictatorships in terms of strategy-proofness and surjectivity. 
But on the single-peaked domain, these axioms admit also many other rules, such as Black's median-voter rule \citep{black1948rationale}.} a careful examination of their proof shows that it holds for our model as well (so we provide here only a sketch of the proof).
We note that \citet{myerson1995axiomatic} and \citet{PIVATO2013210} proved similar results for a model without an ordering assumption, and in principle it would also be possible to derive our Theorem~\ref{thm:psr} as a corollary to their results. 
While Theorem~\ref{thm:psr} is technically nontrivial, it is ultimately not unexpected, given that analogous characterisations of scoring rules have been shown to hold across a variety of domains, as illustrated by the aforementioned works.

\begin{teo}\label{thm:psr}
A voting rule 
satisfies Anonymity, Neutrality, Reinforcement, and Continuity if, and only if, it is a positional scoring rule.
\end{teo}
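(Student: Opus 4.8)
We sketch a proof; the ``only if'' direction is a direct verification, and for the ``if'' direction the plan is to reconstruct a scoring function from $F$ by the variable-electorate convex-geometry argument of \citet{young1975}, checking that no step uses completeness of the preferences --- this is exactly the route of \citet{kruger2020strategic} for acyclic preferences, and since $\mathcal D$ is finite the bookkeeping transfers. Concretely, for the forward direction: a positional scoring rule $F_s$ is anonymous because the total score $\sum_{i\in N}s_{\pref_i}(a)$ does not mention voter identities; it is neutral because $s$ is positional, so relabelling alternatives permutes the scores by the same permutation; it satisfies Reinforcement because if $a$ maximises both $\sum_{i\in N}s_{\pref_i}$ and $\sum_{i\in N'}s_{\pref_i}$, then $a$ maximises the sum over $N\cup N'$, and some $b$ attains the combined maximum exactly when it attains the maximum on each part; and it satisfies Continuity because, for $k$ large, the term $k\cdot\sum_{i\in N}s_{\pref_i}(a)$ dominates the bounded perturbation $\sum_{i\in N'}s_{\pref_i}(a)$, so any winner of $k\prof{\pref}+\prof{\pref}'$ must already be a winner of $\prof{\pref}$.

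For the converse, first use Anonymity to reduce: $F(\prof{\pref})$ depends only on the count vector $v(\prof{\pref})\in\mathbb Z^{\mathcal D}_{\geq 0}$ recording how many voters report each ${\pref}\in\mathcal D$. For $a\in A$ put $W_a=\{\,v(\prof{\pref})\mid a\in F(\prof{\pref})\,\}$. Reinforcement yields two closure properties: from $F(\prof{\pref})\cap F(\prof{\pref})\neq\emptyset$ we get $F(k\prof{\pref})=F(\prof{\pref})$ for all $k\geq 1$, and if $a$ wins at $\prof{\pref}$ and at $\prof{\pref}'$ then $a$ wins at $(\prof{\pref},\prof{\pref}')$. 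Hence $W_a$ is closed under addition and positive integer scaling, and clearing denominators in a nonnegative rational conic combination shows $a\in F(\prof{\pref})$ if and only if $v(\prof{\pref})$ lies in the convex cone $C_a\subseteq\mathbb R^{\mathcal D}_{\geq 0}$ generated by $W_a$. Since $F$ never outputs $\emptyset$ and takes only finitely many values in $2^A\setminus\{\emptyset\}$, the cones $C_a$ are finitely many convex cones covering the nonnegative orthant.

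Continuity then tames the boundaries: if $a$ is the \emph{unique} winner at $\prof{\pref}$, then for every $\prof{\pref}'$ and all large $k$ the set $F(k\prof{\pref},\prof{\pref}')$ is a nonempty subset of $\{a\}$, hence equals $\{a\}$; normalising count vectors into the simplex, this says the unique-winner regions are relatively open, so the $C_a$ must be polyhedral and any two distinct ones meet only along a common lower-dimensional face. The separating-hyperplane theorem then gives, for each ordered pair $a\neq b$, a linear functional $\psi_{ab}$ with $C_a\subseteq\{\psi_{ab}\geq 0\}$, $C_b\subseteq\{\psi_{ab}\leq 0\}$, and $C_a\cap C_b\subseteq\ker\psi_{ab}$. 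The step I expect to be the main obstacle is gluing these pairwise separators into one coherent family: one must fix the (a priori free) scalings of the $\psi_{ab}$ consistently so that there exist linear functionals $\phi_a$ on $\mathbb R^{\mathcal D}$ with $\phi_a-\phi_b=\psi_{ab}$ for all $a,b$ (a cocycle/coboundary argument), and one must rule out the degenerate cases where some $C_a$ has empty interior or where two adjacent cones fail to share a full facet --- this is done, as in Young's proof, by exploiting that the cones cover the whole orthant and invoking Reinforcement once more. Writing $s_{\pref}(a)$ for the ${\pref}$-coordinate of $\phi_a$ then gives a scoring function with $\phi_a(v(\prof{\pref}))=\sum_{i\in N}s_{\pref_i}(a)$, and by construction $F(\prof{\pref})=\argmax_{a\in A}\phi_a(v(\prof{\pref}))=F_s(\prof{\pref})$.

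Finally, Neutrality upgrades $s$ to a positional scoring function. A permutation $\sigma:A\to A$ acts on count vectors by permuting the coordinates according to its action on $\mathcal D$, and $\sigma(F(\prof{\pref}))=F(\sigma(\prof{\pref}))$ gives $\sigma(W_a)=W_{\sigma(a)}$, hence $\sigma(C_a)=C_{\sigma(a)}$; choosing the scalings of the $\psi_{ab}$ symmetrically (possible because the whole cone configuration is $S_m$-equivariant) forces $s_{\sigma(\pref)}(\sigma(a))=s_{\pref}(a)$ for every $\sigma$, i.e.\ the score of an alternative depends only on its position in the graph of ${\pref}$, which is precisely positionality. The only place the partial-order model enters is that $\mathcal D$ is now the (still finite) set of all strict partial orders and the ``positions'' are the orbits of $A\times\mathcal D$ under $S_m$; transitivity and incompleteness are never used beyond finiteness of $\mathcal D$ and closure of $\mathcal D$ under relabelling, so the argument carries over unchanged.
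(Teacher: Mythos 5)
Your proposal takes essentially the same route the paper itself relies on: the paper omits the proof of Theorem~\ref{thm:psr} entirely, deferring to the Young-style convex-cone argument as adapted by \citet{kruger2020strategic} (and noting it could alternatively be derived from \citet{myerson1995axiomatic} or \citet{PIVATO2013210}), and your sketch reconstructs precisely that argument, correctly identifying that only finiteness of $\mathcal{D}$ and its closure under relabelling of alternatives are needed for the transfer to partial orders. The two steps you leave at the level of ``as in Young's proof'' --- deducing polyhedrality of the cones from Continuity, and consistently scaling the pairwise separators into a single family $\{\phi_a\}$ --- are indeed the genuinely hard parts of the cited proofs, but flagging them and outsourcing them is no less complete than what the paper does.
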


\begin{proof}[Proof sketch]
The ``if'' direction is routine. For the ``only if'' direction, we follow the proof of \citet{kruger2020strategic}, adapted to our domain of partial orders. 
By \citet{myerson1995axiomatic}, any rule satisfying Anonymity, Neutrality, Reinforcement, and Continuity is a scoring rule $F_s$ for some scoring function $s$. It remains to show that $s$ can be chosen to be positional. Following \citet{kruger2020strategic}, it suffices to verify that for all alternatives $x, y \in A$ and preferences ${\pref} \in \mathcal{D}$ we get:
\begin{align*}
s_{\pref}(x) - s_{\pref}(y) &= s_{\pref^{xy}}(y) - s_{\pref^{xy}}(x), \\
s_{\pref}(z) - s_{\pref}(y) &= s_{\pref^{xy}}(z) - s_{\pref^{xy}}(x) \quad \text{for all } z \neq x, y.
\end{align*}
The profiles constructed in the verification of these conditions consist entirely of permuted copies of preferences in $\mathcal{D}$. Since applying a permutation~$\sigma$ to a strict partial order preserves irreflexivity, antisymmetry, and transitivity, all constructed profiles remain within $\mathcal{D}$. The remainder of the argument given by \citet{kruger2020strategic} therefore applies without modification.
\end{proof}

\noindent
This result is very useful, as it allows us to directly focus on voting rules that are positional scoring rules, which have a very well-defined structure.

\subsection{The Plurality Class}\label{sec:pluraclass}

We are now ready to provide axiomatic characterisations of the Plurality Class and its subclasses we defined earlier.

\begin{teo}\label{theo:plurality}
A voting rule 
satisfies Anonymity, Neutrality, Reinforcement, Continuity, Partial Faithfulness, and T-Congruity if, and only if, it belongs to the Plurality Class. 
\end{teo}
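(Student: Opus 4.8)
The plan is to prove the two implications separately, in each case reducing via Theorem~\ref{thm:psr} to reasoning about positional scoring functions. For the direction from the Plurality Class to the axioms, suppose $F=F_s$ belongs to the Plurality Class; by Theorem~\ref{thm:psr} it already satisfies Anonymity, Neutrality, Reinforcement, and Continuity, so only Partial Faithfulness and T-Congruity need checking. Normalising $k'=0$ (a per-preference additive shift, which leaves the induced rule unchanged), every score is $\geq 0$ and every $b\in\NT(\pref)$ gets exactly $0$. On a single-voter profile $\pref$ the maximal score is thus attained only inside $T(\pref)$ --- it equals $0$ on $\NT(\pref)$ and is strictly positive at some top alternative whenever $\NT(\pref)\neq\emptyset$ --- which gives Partial Faithfulness. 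For T-Congruity, if $x\notin F(\prof{\pref})$ then some $y$ has a strictly larger total score than $x$ in $\prof{\pref}$; appending voters $i\in N'$ with $x\notin T(\pref'_i)$ contributes $0$ to $x$'s total and a non-negative amount to $y$'s, so $y$ still beats $x$ and $x\notin F(\prof{\pref},\prof{\pref}')$.

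For the direction from the axioms to the Plurality Class, assume $F$ satisfies all six axioms. By Theorem~\ref{thm:psr} we may write $F=F_s$ for a positional scoring function $s$, and it suffices to show that, after an additive shift applied separately to each preference, $s$ has the defining property of the Plurality Class. Two facts suffice: \emph{(i)} for every $\pref$ and every $b\in\NT(\pref)$, $s_\pref(b)=\min_{a\in A}s_\pref(a)$; and \emph{(ii)} for every $\pref$ with $\NT(\pref)\neq\emptyset$, some $a\in T(\pref)$ satisfies $s_\pref(a)>\min_{a'\in A}s_\pref(a')$. Indeed, writing $k'_\pref:=\min_a s_\pref(a)$, fact \emph{(i)} says every non-top alternative gets exactly $k'_\pref$ and no alternative is below $k'_\pref$ (so every top alternative is $\geq k'_\pref$), while \emph{(ii)} provides the strict inequality; shifting each $\pref$ by $-k'_\pref$ then exhibits $s$ as a Plurality-Class scoring function. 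Fact \emph{(ii)} is immediate from Partial Faithfulness: on the single-voter profile $\pref$ we have $\argmax_{a\in A} s_\pref(a)=F(\pref)\subseteq T(\pref)$, so the maximum (attained only within $T(\pref)$) strictly exceeds the minimum whenever $\NT(\pref)\neq\emptyset$.

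The substance is fact \emph{(i)}, and this is where T-Congruity is decisive. Suppose, for contradiction, that $s_\pref(b)>s_\pref(z)$ for some preference $\pref$, some $b\in\NT(\pref)$ and some $z\in A$; let $\varepsilon:=s_\pref(b)-s_\pref(z)>0$. Using Neutrality and the flexibility of positional scoring rules --- Partial Faithfulness forces the first position of a linear order to be the \emph{unique} highest-scoring one, and from this one obtains that an arbitrary total-score vector can be realised up to an overall additive constant by taking suitable non-negative combinations of fully balanced profiles of the form $\overline{\prof{\pref}}$ and of bundles of relabelled linear orders that single out a chosen alternative on top --- construct a base profile $\prof{\pref}^0$ in which $z$ is the unique winner, $b$ trails $z$ by a positive amount below $\varepsilon$, and every other alternative $w$ trails $b$ by more than $s_\pref(w)-s_\pref(b)$. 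Then $b\notin F(\prof{\pref}^0)$. Now append one copy of the single ballot $\pref$: since $b\in\NT(\pref)$ this added voter does not top-rank $b$; alternative $b$ gains $s_\pref(b)$, which is $\varepsilon$ more than $z$ gains and hence enough to overtake $z$, while each $w$ gains only $s_\pref(w)$ and so does not overtake $b$. Thus $b\in F(\prof{\pref}^0,\pref)$, contradicting T-Congruity; this proves \emph{(i)}.

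I expect the main obstacle to be precisely this realisation step: it requires a clean lemma describing which total-score vectors a positional scoring rule can produce --- leaning on Partial Faithfulness to exclude degenerate scoring functions and on Neutrality to neutralise the contributions to the ``non-singled-out'' alternatives --- together with the calibration of the two gap sizes (and, if a single ballot does not close the gap below $\varepsilon$, of a multiplicity $k$ of appended copies of $\pref$, chosen so that $k\varepsilon$ beats the first gap while $k\,(s_\pref(w)-s_\pref(b))$ stays under the second for every relevant $w$). It is worth noting that a cruder ``many identical copies'' argument is \emph{not} enough here: that only yields the weaker conclusion that no non-top alternative attains more than the average score of its preference, which does not force the non-top scores down to the minimum; the small single-ballot (or few-ballot) perturbation above is what does. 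The remaining axioms (Anonymity, Reinforcement, Continuity) are used only via Theorem~\ref{thm:psr} and to license the profile manipulations, and the converse direction is a routine verification.
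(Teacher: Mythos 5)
Your proposal is correct, and its skeleton matches the paper's: reduce to positional scoring rules via Theorem~\ref{thm:psr}, extract the strict inequality at the top from Partial Faithfulness on single-voter profiles, and refute any violation of ``all non-top scores are minimal'' by appending copies of $\pref$ to a tailored base profile in which $b$ is not winning, contradicting T-Congruity. The execution differs in one substantive way. The paper proceeds in two stages: it first applies T-Congruity on the subdomain of linear orders to show that all non-first positions score equally (so $F$ is the standard Plurality rule there), and only then uses individual linear orders as clean one-point ballots to build its base profiles for general partial orders. You skip that first stage entirely: your bundles of all $(m-1)!$ linear orders with a fixed alternative on top need only $s_1>s_i$ for $i\neq 1$ (which Partial Faithfulness already provides), because summing over the bundle symmetrises the contributions to the remaining alternatives and yields a strictly positive boost $\delta=(m-2)!\sum_{i\geq 2}(s_1-s_i)$ for the singled-out alternative. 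The price is that realisable score gaps are non-negative integer multiples of $\delta$ rather than arbitrary, but your calibration (take $k$ appended copies of $\pref$ with $k\varepsilon>\delta$, then enlarge the base gaps to the other alternatives so as to exceed $k\,(s_\pref(w)-s_\pref(b))$) closes that correctly, so the ``realisation lemma'' you flag as the main obstacle is only needed in this weak and easily provable form --- your own caveat that arbitrary vectors are not realisable is well taken and does not harm the argument. Two further remarks: you verify the sufficiency direction explicitly, which the paper leaves to the reader (your check is right); and both proofs quietly ignore the degenerate preference with $T(\pref)=A$, for which the Plurality-Class condition is vacuous, so nothing is lost.
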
  
 
\begin{proof}
Let $F$ be a voting rule that satisfies the six axioms.
First, we show that for the particular case where the profile of preferences consists of only linear orders, $F$ must be the standard Plurality rule.
By Theorem~\ref{thm:psr}, we know that $F$ is a positional scoring rule.
Given a linear order~$L$, let $s_k$ be the score assigned to the alternative in position $k$.
We need to show that $s_1>s_k$  and $s_k=s_{k'}$ for all $k,k'\neq 1$.
By Partial Faithfulness, it is the case that $s_1>s_k$ for all $k\neq 1$.
Now, for the sake of contradiction, suppose there are two positions $k,k'\neq 1$ such that $s_{k}\neq s_{k'}$.
W.l.o.g., we can assume that $s_{k}$ and $s_{k'}$ are the second and third highest scores, respectively, and that $s_{k'}=0$.
     
Let $L_1$ be a linear order with $a$ in the first position, $b$ in position $k$, and $c$ in position $k'$; and let $L_2$ be a linear order with $b$ in the first position, $a$ in position $k$, and $c$ in position $k'$.
Now consider a profile ${\prof{\pref}}=(\ell L_1,\ell 'L_2)$ with
\begin{eqnarray}\label{lconstrain}
    \ell & = & \ell'+1,
\end{eqnarray}
with $\ell$ and $\ell'$ being sufficiently large such that:
\begin{eqnarray}\label{scoreconstrain}
    \left\lceil \dfrac{s_1-s_{k}}{s_{k}} \right\rceil s_1 &\leq & \ell s_{k}+\ell's_1+ \left\lceil \dfrac{s_1-s_{k}}{s_{k}} \right\rceil s_{k}.
\end{eqnarray}
We have $F(\prof{\pref})=\{a\}$ and the aggregate score for $a$ is $\ell s_1+\ell's_{k}$, the one for $b$ is $\ell s_{k}+\ell's_1$, the one for $c$ is $0$, and for any other alternative it is at most $(\ell+\ell')s_{k}$. 
Now let $L_3$ be a linear order with $c$ in the first position, $b$ in position $k$, and $a$ in position $k'$.
Consider the profile ${\prof{\pref}'}=(\lceil \frac{s_1-s_{k}}{s_{k}} \rceil L_3)$.
The aggregated scores in $F(\prof{\pref},\prof{\pref}')$ are:
$\ell s_1+\ell's_{k}$ for $a$; 
$\ell s_{k}+\ell's_1+\lceil \frac{s_1-s_{k}}{s_{k}} \rceil s_{k}$ for $b$; 
$\lceil \frac{s_1-s_{k}}{s_{k}} \rceil s_1$ for $c$; 
and for the other alternatives at most $(\ell+\ell'+\lceil \frac{s_1-s_{k}}{s_{k}} \rceil)s_{k}$. 

Given Equations~(\ref{lconstrain}) and~(\ref{scoreconstrain}), we have that $b\in F(\prof{\pref},\prof{\pref}')$, violating T-Congruity.
So we can conclude that $s_{k}=s_{k'}=0$ for all $k,k'\neq 1$, and thus that $F$ acts as the standard Plurality rule when preferences are linear.\footnote{To the best of our knowledge, this constitutes a new characterisation of the Plurality rule for linear orders that is independent of any of the previously known characterisations \citep{ching1996simple,Merlin1999ImplementationOS,sekiguchi2012characterization}}

Now we show that for any given preference ${\pref}\in\mathcal{D}$ the scores awarded by $F$ are those of a voting rule in the Plurality Class.
To do so, we need to show that there exists a $k\in\mathbb{R}$ such that, for all $c\in \NT(\pref)$, we get $s_\pref(c)=k$; that there is an $a\in T(\pref)$ such that $s_\pref(a)> k$; and that for all $b\in T(\pref)$ we get $s_\pref(b)\geq k$. 
W.l.o.g., we can assume that there is a $c\in \NT(\pref)$ in a position such that $s_\pref(c)=k=0\geq s_\pref(d)$ for all $d\in \NT(\pref)$ (so $c$ gets the highest possible score for a non-top alternative).
By Partial Faithfulness, there is an alternative $a\in T(\pref)$ in a position such that $s_\pref(a)>0$.
Let $a\in F(\pref)$ (so $a$ gets the highest score possible).
Now, suppose there is an alternative $b\in T(\pref)$ such that $s_\pref(b)<0$.
Let $L_1$ and $L_2$ be two linear orders such that $b$ and $c$ are in the first position, respectively.
Now consider the profile ${\prof{\pref}}=(\ell L_1,\ell'L_2)$ such that:
\begin{eqnarray}\label{lconstrain2}
    \ell & = & \ell'+1
\end{eqnarray}
and
\begin{eqnarray}\label{scoreconstrain2}
    \left\lceil \dfrac{1}{-s_\pref(b)} \right\rceil s_\pref(a) & \leq & \ell'.
\end{eqnarray}
Let ${\prof{\pref}}=(\ell L_1,\ell'L_2)$, so we have $F(\prof{\pref})=\{b\}$.
Now consider the profile ${\prof{\pref}'}=(\prof{\pref},\lceil \frac{1}{-s_\pref(b)} \rceil\pref)$.
The aggregated score for $c$ is $\ell'$; the one for $b$ is $\ell+\lceil \frac{1}{-s_\pref(b)} \rceil s_\pref(b)$; and any other alternative gets at most $\lceil \frac{1}{-s_\pref(b)} \rceil s_\pref(a)$.
Then, by Equations~(\ref{lconstrain2}) and~(\ref{scoreconstrain2}), $c\in F(\prof{\pref}')$ and thus T-Congruity is violated.
We can conclude that when $b\in T(\pref)$, it is the case that $s_\pref(b)\geq 0$.

Using a similar argument, we can show that there cannot be a position for a non-top alternative where it gets a negative score.
Thus, we can conclude that $s_\pref(c)=0$ for all $c\in \NT(\pref)$. So $F$ indeed belongs to the Plurality Class.
\end{proof}

\noindent
If $F$ is a positional scoring rule, Faithfulness implies that all the alternatives in the top of a preference receive the same score.

\begin{coro}\label{coro:simpleplurality}
A voting rule 
satisfies Anonymity, Neutrality, Reinforcement, Continuity, Faithfulness, and T-Congruity if, and only if, it belongs to the Simple Plurality Class.    
\end{coro}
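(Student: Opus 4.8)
The plan is to leverage Theorem~\ref{theo:plurality} and reduce the corollary to one short additional observation about Faithfulness. For the ``only if'' direction I would start by noting that Faithfulness implies Partial Faithfulness, so any rule~$F$ satisfying the six listed axioms already satisfies the hypotheses of Theorem~\ref{theo:plurality}; hence $F$ is a positional scoring rule belonging to the Plurality Class. It then only remains to upgrade membership in the Plurality Class to membership in the Simple Plurality Class, i.e., to show that for every preference ${\pref}\in\mathcal{D}$ all alternatives in $T(\pref)$ receive the same score.

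For that step I would fix an arbitrary ${\pref}\in\mathcal{D}$ and consider the single-voter profile consisting of just~$\pref$. On the one hand, Faithfulness gives $F(\pref)=T(\pref)$. On the other hand, since $F=F_s$ is a positional scoring rule, $F(\pref)=\argmax_{a\in A}s_\pref(a)$, and by definition of $\argmax$ every alternative in this set attains one and the same (maximal) value of~$s_\pref$. Combining the two, all alternatives of $T(\pref)$ share a common score, which is exactly the defining property of the Simple Plurality Class. Thus the ``only if'' direction is essentially a one-line consequence of Theorem~\ref{theo:plurality}.

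For the ``if'' direction, suppose $F=F_s$ belongs to the Simple Plurality Class. Being a positional scoring rule, it satisfies Anonymity, Neutrality, Reinforcement, and Continuity by Theorem~\ref{thm:psr}; being in the Plurality Class, it satisfies T-Congruity (and Partial Faithfulness) by Theorem~\ref{theo:plurality}. The only new thing to check is the stronger Faithfulness axiom. Here I would use that, for any ${\pref}\in\mathcal{D}$, all alternatives in $T(\pref)$ get a common score~$k$ while all alternatives in $\NT(\pref)$ get score~$0$; the clause ``with at least one of these inequalities being strict'' in the definition of the Plurality Class forces $k>0$ whenever $\NT(\pref)\neq\emptyset$, and when $\NT(\pref)=\emptyset$ we have $T(\pref)=A$ and trivially $F(\pref)=A=T(\pref)$. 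In either case $\argmax_{a\in A}s_\pref(a)=T(\pref)$, so $F(\pref)=T(\pref)$, establishing Faithfulness.

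The proof needs no genuinely hard step; the only point that requires a little care is the degenerate case $T(\pref)=A$ (the empty preference), and, conceptually, recognising that strengthening Partial Faithfulness to Faithfulness is precisely what collapses the per-top-alternative scores to a single value. Everything else is bookkeeping on top of Theorems~\ref{thm:psr} and~\ref{theo:plurality}.
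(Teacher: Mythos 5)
Your proposal is correct and matches the paper's (very terse) justification: the paper simply notes that for a positional scoring rule, Faithfulness forces all top alternatives of a preference to share the same score, which is exactly your single-voter $\argmax$ argument layered on top of Theorem~\ref{theo:plurality}. Your additional care with the ``if'' direction and the degenerate case $T(\pref)=A$ is consistent with what the paper leaves implicit.
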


\noindent
Contraction requires the scoring rule to assign more points to top alternatives when they are less accompanied by other top alternatives. 
This axiom plays a crucial role in our characterisation of the Monotonic Simple Plurality Class.\footnote{Theorem~\ref{theo:monosimpleplurality} continues to hold if we relax Faithfulness to Partial Faithfulness, albeit at the expense of a somewhat more complex (but still very similar) proof, the details of which we omit in the interest of brevity. This point will become relevant in Section~\ref{sec:appvot}, for Corollary~\ref{coro:sizeapppfcontr}.}

\begin{teo}\label{theo:monosimpleplurality}
A voting rule 
satisfies Anonymity, Neutrality, Reinforcement, Continuity, Faithfulness, and Contraction if, and only if, it belongs to the Monotonic Simple Plurality Class.
\end{teo}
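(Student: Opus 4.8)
The plan is to treat the two directions separately, using Theorem~\ref{thm:psr} to reduce everything to positional scoring rules. The ``if'' direction is routine. If $F=F_s$ belongs to the Monotonic Simple Plurality Class, then it is a positional scoring rule, so Anonymity, Neutrality, Reinforcement, and Continuity hold by Theorem~\ref{thm:psr}. For Partial Faithfulness, in a single-voter profile every top alternative gets the common positive score $g(|T(\pref)|)$ and every other alternative gets $0$, so $F(\pref)=T(\pref)$. For Contraction, moving agent $i$ from $\pref_i$ to $\pref'_i$ with $T(\pref'_i)\subseteq T(\pref_i)$ raises the score of each alternative in $T(\pref'_i)$ by the same amount $\delta=g(|T(\pref'_i)|)-g(|T(\pref_i)|)\geq 0$, drops the score of each alternative in $T(\pref_i)\setminus T(\pref'_i)$ to $0$, and leaves every remaining alternative at $0$; hence no alternative gains more than $\delta$ while each $x\in F(\prof{\pref})\cap T(\pref'_i)$ gains exactly $\delta$, so such an $x$ stays a winner.

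For the ``only if'' direction, Theorem~\ref{thm:psr} lets us write $F=F_s$ for a positional scoring function $s$, and the goal is to renormalise $s$ into the Monotonic Simple Plurality form. The crucial step -- and the main obstacle -- is to squeeze out of Contraction the inequality
\[
s_{\pref'}(a)-s_{\pref'}(y)\;\geq\; s_\pref(a)-s_\pref(y)
\qquad\text{whenever }T(\pref')\subseteq T(\pref),\ a\in T(\pref'),\ y\in A.
\]
To prove it, assume it fails with gap $\eta>0$ and build, using Reinforcement and Continuity to control the aggregate scores and fixing the number of copies through ceiling expressions exactly as in the proof of Theorem~\ref{theo:plurality}, a profile $\prof{\pref}$ in which agent $i$ reports $\pref$, the alternatives $a$ and $y$ are tied for the win, and every other alternative trails by more than the largest score difference between $\pref$ and $\pref'$. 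Switching agent $i$ to $\pref'$ then lifts $y$ strictly above $a$ without letting any third alternative overtake, so $a\notin F(\pref'_i,\prof{\pref}_{-i})$ even though $a\in F(\prof{\pref})\cap T(\pref')$, contradicting Contraction. Making these constructions work for arbitrary real-valued scores is where the real effort goes; when $T(\pref')=T(\pref)$ the inequality holds in both directions and so becomes an equality.

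The remaining steps are then mostly bookkeeping on top of this inequality. Applying the equality version with $\pref'=\tau(\pref)$ for a transposition $\tau$ swapping two top alternatives (resp.\ two non-top alternatives), together with positionality/Neutrality, forces all top alternatives of $\pref$ to share one score $k(\pref)$ and all non-top alternatives to share one score; an additive per-preference shift (which does not change $F$) normalises the latter to $0$ for every non-antichain $\pref$, and Partial Faithfulness then forces $k(\pref)>0$ there, since otherwise a non-top alternative (or all of them) would be selected in the profile $(\pref)$. Thus $F$ lies in the Simple Plurality Class. Next, applying the equality version to two preferences with the same top set but different structure below, with $y$ chosen outside that top set, shows $k(\pref)$ depends only on $T(\pref)$ and hence, by Neutrality, only on $|T(\pref)|$; write $k(\pref)=g(|T(\pref)|)$. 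Applying the inequality with $T(\pref')\subsetneq T(\pref)$ -- taking $\pref'$ to be $\pref$ with a single top element demoted below all the others (which I would check keeps every other top element in place) and $y$ non-top -- yields $g(j)\geq g(j+1)$, so $g$ is non-increasing on $\{1,\dots,m-1\}$.

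Finally, the antichain needs a separate remark: a voter reporting the antichain gives every alternative the same score (by Neutrality), so this score has no effect on $F$ and may be freely shifted to any value $g(m)$ with $0<g(m)\leq\min_{j<m}g(j)$; this completes the Monotonic Simple Plurality form and is consistent with all the normalisations made earlier. The only genuinely delicate points are the profile constructions behind the displayed inequality and checking that these normalisations -- which are forced for every preference except the antichain -- can be carried out simultaneously, which they can precisely because the antichain's contribution is always a constant.
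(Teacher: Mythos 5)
Your architecture is genuinely different from the paper's and, at the level of structure, sound: you route everything through the single inequality $s_{\pref'}(a)-s_{\pref'}(y)\geq s_\pref(a)-s_\pref(y)$ for $T(\pref')\subseteq T(\pref)$ and $a\in T(\pref')$, and your bookkeeping on top of it (transpositions for equal scores within the top and within the non-top, the per-preference normalisation, dependence on $|T(\pref)|$ via Neutrality, monotonicity of $g$ by demoting one top element, the separate treatment of the antichain) is correct, as is your verification of the ``if'' direction, which the paper omits.

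The problem is that the one step you defer is the step that fails as sketched. You want a profile in which agent $i$ reports $\pref$ and ``$a$ and $y$ are tied for the win'' with all other alternatives trailing by a large margin. With agent $i$ pinned to $\pref$, the difference between the totals of $a$ and $y$ equals $s_\pref(a)-s_\pref(y)$ plus an element of the lattice generated by the score differences available in whatever auxiliary ballots you add; for arbitrary real-valued scores there is no reason this set contains $0$, so an exact tie is generally unattainable, and the ceiling-based counting from the proof of Theorem~\ref{theo:plurality} does not produce one. The repair is either a genuinely more delicate argument (aim for a gap in $[0,n\eta)$ using $n$ agents reporting $\pref$ and switch them to $\pref'$ one at a time, each switch licensed by Contraction) or the device the paper actually uses and which your proposal is missing: the profile $\overline{\prof{\pref}}$ of all $m!$ relabellings of $\pref$, for which Anonymity and Neutrality force the exact $m$-way tie $F(\overline{\prof{\pref}})=A$ for free. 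A single agent's deviation from that profile (to $\pref^{cd}$ with $c,d\in\NT(\pref)$, or to a preference with a smaller top) then exposes the relevant score comparison directly, with no approximation; equality of scores inside the top comes even more cheaply from the single-voter profile $(\pref)$ together with Neutrality and Contraction applied to $\pref^{ab}$. Until the tied-profile construction is replaced by one of these, the central inequality --- and hence the ``only if'' direction --- is not established.
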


\begin{proof}
Let ${\pref}\in\mathcal{D}$ and assume w.l.o.g.\ that $T(\pref)\neq A$. 
By Theorem~\ref{thm:psr}, we know that $F$ is a positional scoring rule.
By Faithfulness, it is the case that $F(\pref)=T(\pref)$.
Thus, we can conclude that there exists a value $k$ such that $s_\pref(a)=k$ for all $a\in T(\pref)$, and $k>s_\pref(c)$ for all $c\in \NT(\pref)$.

Now assume that $|\NT(\pref)|\geq 2$ and that there are two positions where alternatives $c,d\in \NT(\pref)$ get scores $s_\pref(c)>s_\pref(d)$.
Consider the profile $\overline{\prof{\pref}}$.\footnote{We remind the reader that $\overline{\prof{\pref}}$ is the profile where each individual preference has the same graph structure as $\pref$, and every alternative appears exactly once in every position.}
By Neutrality and Anonymity, $F(\overline{\prof{\pref}})=A$.
Now consider $\pref_i^{cd}$, that is, the preference $\pref$ with the labels of $c$ and $d$ switched and submitted by individual~$i$.
We obtain that $F(\pref_i^{cd},\overline{\prof{\pref}}_{-i})=\{d\}$, as $d$ gets a higher score.
But, as $T(\pref)=T(\pref_i^{cd})\cap F(\overline{\prof{\pref}})$, by Contraction we have that \mbox{$T(\pref)\subseteq (\pref_i^{cd},\overline{\prof{\pref}}_{-i})$}, obtaining a contradiction.
Thus there is a $k$ such that $s_\pref(c)=k$ for all $c\in \NT(\pref)$.

As Contraction only considers the alternatives in the top of a preference and not the graph structure of the top, and by Neutrality, we have that for any two preferences $\pref$ and $\pref'$ such that $|T(\pref)|=|T(\pref')|$, it is the case that $s_\pref(a)=s_{\pref'}(b)$ for any $a\in T(\pref)$ and $b\in T(\pref')$, or any $a\in \NT(\pref)$ and $b\in \NT(\pref')$.

We can normalise the scores by means of additive shifts, such that for all ${\pref}\in\mathcal{D}$ we get \mbox{$s_\pref(c)=0$ for all $c\in \NT(\pref)$}.
Now consider two preferences ${\pref},{\pref'}\in\mathcal{D}$ where \mbox{$|T(\pref)|<|T(\pref')|$}, and $k$ and $k'$ are the scores for the top alternatives of $\pref$ and $\pref'$, respectively. 
Suppose $k<k'$.
Consider the profile $\overline{\prof{\pref}'}$.
By Neutrality and Anonymity, \mbox{$F(\overline{\prof{\pref}'})=A$}.
We have that \mbox{$T(\pref)\cap F(\overline{\prof{\pref}'})=T(\pref)$} and thus, assuming $i$ has preference $\pref$, by Contraction, $T(\pref)\subseteq F(\pref,\overline{\prof{\pref}'}_{-i})$.
But in $F(\pref,\overline{\prof{\pref}'}_{-i})$, all the alternatives in $T(\pref')$ get their scores reduced, while those not in the top of $T(\pref')$ remain with the same score, obtaining $F(\pref,\overline{\prof{\pref}'}_{-i})=A\setminus{T(\pref')}$.
With this contradiction, we can conclude that $k'\leq k$, and thus, that $F$ belongs to the Monotonic Simple Plurality Class.
\end{proof}

\noindent
We now introduce a further condition known as Tops-Only. 
Importantly, we think of Tops-Only as a purely technical property, with no immediate normative appeal in and of its own. 
However, as we are going to see soon, it will turn out to be implied by another property that does have such appeal, namely Strong Contraction.
Tops-Only requires that only the top-ranked alternative(s) in each voter’s preference matter to the outcome.
Formally, a voting rule satisfies Tops-Only if, for any two profiles ${\prof{\pref}},{\prof{\pref}'}\in\mathcal{D}^N$ with $T(\pref_i)=T(\pref_i')$ for all $i\in N$, it is the case that $F(\prof{\pref})=F(\prof{\pref}')$.\footnote{For further motivation regarding this property, and its use for characterisation results, we refer to \citet{sekiguchi2012characterization} and \citet{kelly2016characterizing}.
Earlier works that have considered this property include those by \citet{moulin1980strategy}, \citet{barbera1991voting}, and \citet{korayselfselective2000}.}
We show that any rule satisfying Strong Contraction necessarily depends only on the voters’ top-ranked alternatives.

\begin{lema}\label{lema:coherencetopsonly}
Strong Contraction implies Tops-Only.    
\end{lema}

\begin{proof}
Let ${\prof{\pref}},{\prof{\pref}'}\in\mathcal{D}^N$ be two profiles such that, for all $i\in N$, it is the case that $T(\pref_i)=T(\pref'_i)$.
For any $i\in N$, we have that $T(\pref_i)\setminus T(\pref'_i)=\emptyset$, and thus, by Strong Contraction, $F(\prof{\pref})\subseteq F(\pref'_i,\prof{\pref}_{-i})$.
By iterating this process for all $i\in N$, we obtain that $F(\prof{\pref})\subseteq F(\prof{\pref}')$.
With the same reasoning, we obtain that $F(\prof{\pref}')\subseteq F(\prof{\pref})$, concluding that $F(\prof{\pref})= F(\prof{\pref}')$.
\end{proof}

\noindent
To characterise the Uniform Plurality rule, we build on two different results.
First, we use the result by \citet{sekiguchi2012characterization} who characterises the Plurality rule for linear orders in terms of Anonymity, Neutrality, Reinforcement, Partial Faithfulness, and Tops-Only.
And then we use the result by \citet{PIVATO2013210}, who shows that if a rule satisfies Anonymity, Neutrality and Reinforcement, it is a \emph{composite scoring rule}.
This is a rule that repeatedly applies scoring functions in order to break possible ties.

\begin{teo}\label{theo:uniformplurality}
A voting rule 
satisfies Anonymity, Neutrality, Reinforcement, Partial Faithfulness, and Strong Contraction if, and only if, it is the Uniform Plurality rule. 
\end{teo}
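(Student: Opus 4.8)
The plan is to leverage the earlier results as much as possible so that only the new axiom, Strong Contraction (in place of Continuity and with Partial Faithfulness retained), does the work. The easy direction is that the Uniform Plurality rule satisfies all five axioms: Anonymity, Neutrality, and Reinforcement are immediate for any positional scoring rule; Partial Faithfulness holds because only top alternatives get a positive score; and Strong Contraction holds because shrinking the top set from $T(\pref_i)$ to $T(\pref_i')$ removes exactly the alternatives in $T(\pref_i)\setminus T(\pref_i')$ from receiving a point, so any previous winner not among those loses nothing — combined with the fact that, by Proposition~\ref{propi:coherencetopsonly}, it suffices to track top sets. For the hard direction, suppose $F$ satisfies the five axioms. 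Since Strong Contraction implies the (plain) Contraction axiom — observe that if $F(\prof{\pref})\cap T(\pref'_i)\neq\emptyset$, then any $x\in F(\prof{\pref})\cap T(\pref_i')$ lies outside $T(\pref_i)\setminus T(\pref_i')$, so Strong Contraction already gives $x\in F(\pref'_i,\prof{\pref}_{-i})$ — we would like to invoke Theorem~\ref{theo:monosimpleplurality} to conclude $F$ is in the Monotonic Simple Plurality Class. The only gap is that Theorem~\ref{theo:monosimpleplurality} also assumed Continuity, which we no longer have; so the first real step is to recover enough of the scoring-rule structure without Continuity.

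The second step, then, is to replace the appeal to Theorem~\ref{thm:psr}. Here I would invoke the characterisation of Plurality for linear orders by \citet{sekiguchi2012characterization}: on the subdomain of linear-order profiles, $F$ satisfies Anonymity, Neutrality, Reinforcement, Partial Faithfulness, and — by Proposition~\ref{propi:coherencetopsonly} — Tops-Only, hence $F$ restricted to linear orders is exactly the standard Plurality rule. This pins down the behaviour on linear profiles without needing Continuity. The third step is to bootstrap from linear profiles to arbitrary partial-order profiles using Strong Contraction / Tops-Only and Reinforcement: given an arbitrary profile $\prof{\pref}\in\mathcal{D}^N$, Tops-Only lets us assume each $\pref_i$ is any partial order with the prescribed top set $T(\pref_i)$; the goal is to show $F(\prof{\pref}) = \argmax_{a\in A}|\{\,i\in N : a\in T(\pref_i)\,\}|$, i.e., that $F$ counts top-appearances with weight one regardless of how large each top set is or how the non-top is structured. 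For this I would, for each voter $i$, compare $\pref_i$ with a linear order $L_i$ whose top element is some fixed $t_i\in T(\pref_i)$: Strong Contraction applied in one direction (shrinking $T(\pref_i)$ down to $\{t_i\}$) controls how winners transform, and by cycling the choice of $t_i$ over all elements of $T(\pref_i)$ and combining via Reinforcement and Neutrality, one forces every element of $T(\pref_i)$ to contribute symmetrically and with the same unit weight as a first-place element of a linear order. Iterating over all $i\in N$ reduces the arbitrary profile to a linear-order profile with a known Plurality outcome, and unwinding the chain of inclusions (using that Strong Contraction gives inclusions in the "shrinking" direction, and reversibility via building the partial order back up from the linear order — but note the top set only grows, so one must be careful about direction) yields the desired equality.

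The main obstacle I expect is exactly this direction-of-inclusion bookkeeping in the third step: Strong Contraction only gives $F(\prof{\pref})\setminus(T(\pref_i)\setminus T(\pref_i'))\subseteq F(\pref'_i,\prof{\pref}_{-i})$ when $T(\pref'_i)\subseteq T(\pref_i)$, i.e., it is a one-way tool that behaves well only when the top set shrinks, so recovering the reverse inclusion (needed to get an equality $F(\prof{\pref})=$ Plurality outcome) requires a separate argument — most likely running the shrinking argument from a cleverly chosen auxiliary profile, or using a symmetric "filler" profile $\overline{\prof{\pref}}$ together with Anonymity and Neutrality (so that $F(\overline{\prof{\pref}})=A$) to create slack, then applying Strong Contraction and Reinforcement to isolate the single element we want. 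A secondary subtlety is verifying that the rule genuinely assigns unit weight — not merely equal positive weight — to each top alternative independently of $|T(\pref_i)|$; this is where the monotonicity built into the Monotonic Simple Plurality Class, together with the linear-order anchor from \citet{sekiguchi2012characterization}, must be combined to squeeze the score of a top element to the same constant as that of a linear-order top, i.e.\ to $1$ after normalisation.
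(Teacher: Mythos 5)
Your easy direction and the first two steps of the hard direction match the paper: Strong Contraction yields Tops-Only (Proposition~\ref{propi:coherencetopsonly}), and the characterisation of \citet{sekiguchi2012characterization} then pins $F$ down on linear-order profiles. But the remainder of your argument is a plan with two genuine gaps, and they sit exactly where the real work is. First, without Continuity you have no scoring-rule representation of $F$ on the full domain of partial orders, so your talk of each top alternative ``contributing unit weight'' is not grounded in anything; the paper secures this structure by invoking the result of \citet{PIVATO2013210} that Anonymity, Neutrality, and Reinforcement alone already force $F$ to be a composite scoring rule. Your proposed substitute --- cycling the choice of $t_i\in T(\pref_i)$ and ``combining via Reinforcement and Neutrality'' --- does not work as stated: Reinforcement concatenates profiles over \emph{disjoint electorates} and gives you no way to average over several alternative ballots of one and the same voter.

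Second, the decisive quantitative step --- that the score of a top alternative does not depend on $|T(\pref_i)|$ --- is precisely the step you defer to ``the monotonicity built into the Monotonic Simple Plurality Class,'' but, as you note yourself, membership in that class (Theorem~\ref{theo:monosimpleplurality}) was established only under Continuity, so this appeal is circular. The paper instead proves the step directly. It first shows all non-top positions score equally (a near-tied block $n\prof{L}$ of linear orders whose outcome would flip under a relabelling of two non-top alternatives, contradicting Tops-Only), normalises that common score to $0$, then takes $\pref,\pref^\ast$ with $|T(\pref)|>|T(\pref^\ast)|$, builds the symmetrised profile $(n\prof{L},\pref,\pref^\ast,\pref^{ab},\pref^{\ast(ab)})$ with winner set $\{a,b\}$, and shrinks $T(\pref)$ down to the size of $T(\pref^\ast)$: if the top scores differed, the winner set would collapse to $\{b\}$, violating Strong Contraction. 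Your acknowledged ``direction-of-inclusion bookkeeping'' and the unit-weight claim are not secondary subtleties but the core of the proof, and without explicit constructions of this kind the sketch does not close.
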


\begin{proof}
By Lemma~\ref{lema:coherencetopsonly}, we have that any rule satisfying our axioms also satisfies Tops-Only.
By a result due to \citet{sekiguchi2012characterization}, we know that when the profile consists of linear orders, a voting rule satisfies all the axioms if, and only if, it is the Plurality rule.
By applying an additive shift for the linear orders, we can normalise the rule such that the score for the non-top alternatives of a linear order is equal to $0$, and let $s_L>0$ be the score for the top alternative.

By the aforementioned result due to \citet{PIVATO2013210}, we know that if a rule satisfies Anonymity, Neutrality and Reinforcement, it is a composite scoring rule.
Thus, our rule involves a sequence of scoring functions $s^1,\ldots,s^k$ where $s^{k'}$ is used to break ties not yet broken by $s_1,\ldots,s_{k'-1}$.

Let ${\pref}\in\mathcal{D}$ be a generic preference.
By Partial Faithfulness and Tops-Only, we have that $F(\pref)=T(\pref)$ (to see this, consider the top with all symmetric alternatives).
This implies that for all $i\in\{1,\ldots,k\}$, all \mbox{$a\in T(\pref)$}, and all $b\in \NT(\pref)$, $s_\pref^i(a)=k_i$ and $k_i\geq s_\pref^i(b)$, with at least one $t$ such that $k_t> s_\pref^t(b)$.

Now we assume there are an $s_\pref^j$ and two positions not at the top such that alternatives $c,d\in \NT(\pref)$ get scores $x=s_\pref^j(c)>s_\pref^j(d)=y$.
Let $\prof{L}=(\pref_1,\pref_2)$ with $\pref_1$ and $\pref_2$ being linear orders such that $c\pref_1 d\pref_1\ldots$ and  $d\pref_2 c\pref_2\ldots$, and all the alternatives different from $c$ and $d$ being ordered the same.
As $F$ is the Plurality rule when considering linear orders, we have that $F(\prof{L})=\{c,d\}$.
As for linear orders $F$ acts as the plurality rule, we obtain that the scores are $s_L$ for $c$ and $d$, and 0 for the other alternatives.
Let $n$ be such that $ns_L+x>\max_i k_i$.
Thus, $F(n\prof{L},\pref)=\{c\}$, as $c$ gets a higher score than $d$, and any other alternative gets at most $\max_i k_i$.
Now consider $\pref^{cd}$, to obtain $F(n\prof{L},\pref^{cd})=\{d\}$ (as now $d$ gets $ns_L+x$ points).
But this violates Tops-Only (and thus Strong Contraction).
So we conclude that for all $i\in\{1,\ldots,k\}$ and all $c\in \NT(\pref)$ it is the case that $s_\pref^i(c)=k_i'\leq k_i$, with at least one $t$ such that $k'_t<k_t$.
Thus, we have that all the scoring rules award the same number of points to the top alternatives, the same number of points to the non-top alternatives, and at least one scoring rule awards a higher score to the top alternatives (thus breaking any tie among top and non-top alternatives).
This can be represented by a single scoring function $s_\pref$ such that, for all \mbox{$a\in T(\pref)$} and all $b\in \NT(\pref)$, it is the case that $s_\pref(a)=k>k'=s_\pref(b)$
By applying an additive shift, we can normalise the scoring function such that $k'=0$.

By Neutrality and Tops-Only, we only care about the number of alternatives that are in the top.
Suppose there are two different preferences $\pref,\pref^\ast$ such that $T(\pref)\neq A $, $k=s_\pref(a)\geq s_{\pref^\ast}(b)=k^\ast$ for $a\in T(\pref)$ and $b\in T(\pref^\ast)$, and $|T(\pref)|>|T(\pref^\ast)|$.\footnote{Observe that we may assume $T(\pref) \neq A$ without loss of generality. In case $T(\pref) = A$, by applying an additive shift, we can give any score we want to the alternatives.}
Let $b\in \NT(\pref)$, $b\in T(\pref^\ast)$, and $a\in \NT(\pref^\ast)$, $a\in T(\pref)$ (by Tops-Only and Neutrality, the structure of the top and the names of the alternatives are irrelevant).
Let $\prof{L}=(\pref_1,\pref_2)$ be a profile consisting of two linear orders with $a\pref_1 b\pref_1\ldots$ and  $b\pref_2 a\pref_2\ldots$, and all the alternatives different from $a$ and $b$ are ordered the same.
For a sufficiently large $n$ we have by Anonymity and Neutrality that $F(n\prof{L},\pref,\pref^\ast,\pref^{ab},\pref^{\ast(ab)})=\{a,b\}$.
Now remove alternatives different from $a$ from the top of $\pref$, until we get a preference $\pref'$ such that $|T(\pref')|=|T(\pref^\ast)|$.
As now $a$ gets a lower score than $b$ (as the contribution from $\pref'$ is less) we have $F(n\prof{L},\pref,\pref',\pref^{ab},\pref^{\ast(ab)})=\{b\}$, violating Strong Contraction.
Thus $k\leq k^\ast$.

Following a similar line of reasoning, if we assume that $|T(\pref)|<|T(\pref^\ast)|$, we obtain that $k\geq k^\ast$.
We conclude that $k=k^\ast$.
By setting $k=1$, we obtain that $F$ is the Uniform Plurality rule.
\end{proof}

\noindent
Next, we proceed to establishing the independence of the axioms used in the previous results by providing examples of voting rules that satisfy all but one of the axioms featuring in our theorems.
The most interesting case is the case of a voting rule that satisfies all of the axioms but Continuity, as that axiom does not feature in the characterisation of the Plurality rule in the standard model of voting with linear preferences due to \citet{sekiguchi2012characterization}. 
Also, recall that \citet{ching1996simple} showed that Continuity can be dropped from \citeauthor{richelsonplurality1978}'s \citeyearpar{richelsonplurality1978} original characterisation. 
So the need for Continuity is a crucial difference between our Theorems~\ref{theo:plurality} and~\ref{theo:monosimpleplurality} and related results for the standard model. 
Loosely speaking, the reason for this difference is that, for our model of voting with partial orders, the different structures of the top sets permit us to vary the number of points awarded to the most preferred alternatives, making T-Congruity and Contraction less demanding.

\begin{propi}[Independence of Axioms in Theorems~\ref{theo:plurality} and~\ref{theo:monosimpleplurality}]\label{propi:indepplur}
The axioms of Anonymity, Neutrality, Reinforcement, Continuity, Partial Faithfulness, and T-Congruity/Contraction are logically independent.    
\end{propi}

\begin{proof}
So here we are claiming the independence of two sets of six axioms each.
To prove the claim, for each of the six axioms featuring in Theorem~\ref{theo:plurality} and~\ref{theo:monosimpleplurality}, we provide an example of a voting rule that satisfies all of the axioms but one, and that does not belong to the Plurality Class. 
In most cases the fact that the rules provided have the required properties is straightforward to verify; where this is not the case, we provide additional details.

\smallskip\noindent\textit{Anonymity, Neutrality, Reinforcement, Continuity, and Partial Faithfulness but neither T-Congruity nor Contraction:} the Borda-style rule that assigns to each alternative $a$ in a partial order a score equal to the number of alternatives that $a$ is preferred to.

\smallskip\noindent\textit{Anonymity, Neutrality, Reinforcement, Continuity, T-Congruity, and Contraction but not Partial Faithfulness:} the rule that always selects the full set of alternatives.

\smallskip\noindent\textit{Anonymity, Neutrality, Reinforcement, Partial Faithfulness, T-Congruity, and Contraction but not Continuity:} the rule defined by the following two-step procedure. In the first step, select all winners returned by the Uniform Plurality rule. In the second step, retain from this set only those alternatives that appear as the unique top element in at least one agent's preference---if any such alternatives exist; otherwise, keep all alternatives selected in the first step.\footnote{The inspiration for this voting rule comes from an example due to \citet{myerson1995axiomatic}.} 
See Example~\ref{examplecontinuityplur} for a failure of Continuity.
The satisfaction of the remaining properties is inherited from the Uniform Plurality rule. 
In particular, Contraction is satisfied because when a voter's top shrinks, any elected alternative that remains in the reduced top continues to be selected: either together with the other winners that were not removed, or, if it becomes the sole top element, as the unique winner. 

\smallskip\noindent\textit{Anonymity, Neutrality, Continuity, Partial Faithfulness, T-Congruity, and Contraction but not Reinforcement:} the rule such that for single-voter profiles selects the top of the sole voter and for other profiles selects the Uniform Plurality winners and all the alternatives that get exactly one point less than the Uniform Plurality winners. 
The following example shows that this rule indeed violates Reinforcement. 
If in profile $\prof{\pref}$ the alternatives $a$ and $b$ appear 10 and 9 times in the top, respectively, and in $\prof{\pref}'$ they appear 7 and 6 times, respectively, then $\{a\}=F(\prof{\pref},\prof{\pref}')\neq F(\prof{\pref})\cap  F(\prof{\pref}')=\{a,b\}$.

\smallskip\noindent\textit{Anonymity, Reinforcement, Continuity, Partial Faithfulness, T-Congruity, and Contraction but not Neutrality:} the rule assigning points to alternatives just like the Uniform Plurality rule but that doubles the score of one fixed alternative $a\in A$ before determining the winners.

\smallskip\noindent\textit{Neutrality, Reinforcement, Continuity, Partial Faithfulness, T-Congruity, and Contraction but not Anonymity:} the rule assigning points to alternatives just like the Uniform Plurality rule except for voter~$1$'s top alternatives, which each receive $2$ points.
\end{proof}
\begin{exa}\label{examplecontinuityplur}
To exemplify the operation of the rule demonstrating that Continuity is a necessary axiom for our characterisation of the Plurality Class, consider the following two preferences, $\pref_1$ and $\pref_2$:
\begin{center}\begin{tikzpicture}
\node (1) {$c$};
\node (3) [above of=1] {$b$};
\node (2) [above of=3] {$a$};
\draw[->] (2) -- (3);
\draw[->] (3) -- (1);
\end{tikzpicture}
\qquad
\begin{tikzpicture}
\node (1) {$a$};
\node (2) [above left of=1] {$b$};
\node (3) [above right of=1] {$c$};
\draw[->] (2) -- (1);
\draw[->] (3) -- (1);
\end{tikzpicture}\end{center}
In the two-agent profile $(\pref_1,\pref_2)$, in the first step we select $\{a,b,c\}$, but as only $a$ appears as a unique top, in the second step we obtain $\{a\}$.

The failure of Continuity can now be observed by noting that the rule instead returns $\{b,c\}$ for any profile that is composed of $k>1$ copies of $(\pref_1,\pref_2)$ and one further copy of $(\pref_2)$. 
Indeed, for any such profile, we select $\{b,c\}$ in the first step, after which no further selection will occur in the second step.
\end{exa}

\noindent
The next statement, regarding the independence of the axioms of Theorem~\ref{theo:uniformplurality}, is presented without proof, as the same examples as in the proof of Proposition~\ref{propi:indepplur} can be used (except that certain rules also satisfying Continuity are not relevant here).

\begin{propi}[Independence of Axioms in Theorem~\ref{theo:uniformplurality}]\label{propi:indepunifplur}
The axioms of Anonymity, Neutrality, Reinforcement, Partial Faithfulness, and Strong Contraction are logically independent.    
\end{propi}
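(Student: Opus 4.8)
The plan is to reuse the constructions from the proof of Proposition~\ref{propi:indepplur}, exhibiting for each of the five axioms in Theorem~\ref{theo:uniformplurality}---Anonymity, Neutrality, Reinforcement, Partial Faithfulness, and Strong Contraction---a voting rule that satisfies the other four but fails the remaining one, and that is not the Uniform Plurality rule. The key observation is that Theorem~\ref{theo:uniformplurality} is obtained from the characterisation of the Monotonic Simple Plurality Class (Theorem~\ref{theo:monosimpleplurality}) by dropping Continuity and strengthening Contraction to Strong Contraction; hence four of the five separating examples can be taken verbatim from the earlier proof, and since all the rules used there happened also to satisfy Continuity where needed, this causes no difficulty.

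Concretely, I would proceed as follows. For the failure of \emph{Partial Faithfulness}, take the rule that always returns the full set $A$; it is trivially anonymous, neutral, reinforcing, and satisfies Strong Contraction (its conclusion $F(\prof{\pref})\setminus(\cdots)\subseteq F(\pref'_i,\prof{\pref}_{-i})$ holds since the right-hand side is all of $A$), yet it is not Uniform Plurality. For the failure of \emph{Neutrality}, take the Uniform Plurality scoring rule but double the score of one fixed alternative before determining winners; this preserves Anonymity, Reinforcement, and Partial Faithfulness, and since Strong Contraction (hence Tops-Only, via Proposition~\ref{propi:coherencetopsonly}) is a property about which top alternatives survive, one checks directly that boosting a fixed alternative's count does not break the Tops-Only-style guarantee. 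For the failure of \emph{Anonymity}, use the variant that assigns two points to voter~$1$'s top alternatives and one point to everyone else's; it is neutral, reinforcing, Partial Faithful, and still Strong-Contraction-compliant by the same reasoning. For the failure of \emph{Reinforcement}, take the rule that on single-voter profiles returns the voter's top and otherwise returns the Uniform Plurality winners together with all alternatives scoring exactly one point less; the example given in the proof of Proposition~\ref{propi:indepplur} (with counts $10,9$ and $7,6$) shows Reinforcement fails, while Anonymity, Neutrality, Partial Faithfulness, and Strong Contraction are retained. Finally, for the failure of \emph{Strong Contraction} itself, I would use the Borda-style rule that scores each alternative by the number of alternatives it dominates: it satisfies Anonymity, Neutrality, Reinforcement, and Partial Faithfulness, but it is not in the Plurality Class at all (non-top alternatives can receive positive scores), so in particular it violates Strong Contraction, which by Proposition~\ref{propi:coherencetopsonly} would force Tops-Only---a property the Borda-style rule clearly lacks.

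The only point requiring genuine care---the main obstacle---is verifying that the rules used to separate Neutrality, Anonymity, and Reinforcement do in fact satisfy Strong Contraction, not merely the weaker Contraction used in Theorem~\ref{theo:monosimpleplurality}. Here one argues as follows: each of these rules is, up to a fixed relabelling of scores attached to a fixed voter or a fixed alternative, a positional scoring rule in which every non-top alternative of every preference receives score~$0$ and every top alternative receives a strictly positive score; consequently, when a voter~$i$ replaces $\pref_i$ by $\pref'_i$ with $T(\pref'_i)\subseteq T(\pref_i)$, the scores of all alternatives in $A\setminus(T(\pref_i)\setminus T(\pref'_i))$ are unchanged (those in $T(\pref'_i)$ keep their positive score, those outside $T(\pref_i)$ keep score~$0$), while the scores of alternatives in $T(\pref_i)\setminus T(\pref'_i)$ can only decrease; hence any winner of $\prof{\pref}$ outside $T(\pref_i)\setminus T(\pref'_i)$ remains a winner, which is exactly Strong Contraction. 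This verification is routine once phrased this way, and it is the same computation underlying why the Uniform Plurality rule itself satisfies Strong Contraction. Since every item above produces a rule distinct from Uniform Plurality, the five axioms are logically independent, completing the proof.
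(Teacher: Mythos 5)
Your proposal is correct and follows exactly the route the paper takes: the paper states Proposition~\ref{propi:indepunifplur} without proof, remarking only that the separating examples from Proposition~\ref{propi:indepplur} can be reused, which is precisely your plan. You in fact go further than the paper by explicitly verifying that the Neutrality-, Anonymity-, and Reinforcement-violating rules satisfy \emph{Strong} Contraction rather than just Contraction (your score-monotonicity argument is sound; for the Reinforcement-violating rule one only needs the additional trivial observation that the winning threshold, being one below the maximum score, also weakly decreases), so the verification the paper leaves implicit is supplied correctly.
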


\subsection{The Anti-Plurality Class}\label{sec:antipluraclass}

In this section, we present the characterisation of the voting rules that make up the Anti-Plurality Class.\footnote{Some of the proofs are similar to those for the corresponding results regarding the Plurality Class, although it is not the case that we can obtain the results for the Anti-Plurality Class as corollaries to those for the Plurality Class.}

\begin{teo}\label{theo:antiplurality}
 A voting rule 
satisfies Anonymity, Neutrality, Reinforcement, Continuity, Partial Averseness, and B-Congruity if, and only if, it belongs to the Anti-Plurality class.       
\end{teo}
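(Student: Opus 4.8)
The plan is to run the same high-level strategy as in the proof of Theorem~\ref{theo:plurality}, although it turns out to be considerably shorter here. For the direction from the axioms to the class, by Theorem~\ref{thm:psr} the rule $F$ is a positional scoring rule $F_s$, and since only the $\argmax$ matters we are free to rescale $s$ separately on each preference (adding a constant to $s_\pref$ shifts every alternative's aggregate by the same amount), so normalisation may be deferred. The single observation doing all the work is this: for every ${\pref}\in\mathcal{D}$ we have $F(\overline{\prof{\pref}})=A$ by Neutrality and Anonymity (the multiset $\overline{\prof{\pref}}$ is invariant under relabelling the alternatives, and $A$ is the only non-empty $S_m$-invariant subset); since $F$ is a scoring rule this means every alternative gets the same aggregate score in $\overline{\prof{\pref}}$, and therefore $F(\overline{\prof{\pref}},\prof{\pref}')=F(\prof{\pref}')$ for any profile $\prof{\pref}'$ on a disjoint electorate.

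Now fix ${\pref}\in\mathcal{D}$ with $B(\pref)\neq A$ and any $a\in\NB(\pref)$. Apply B-Congruity with $x=a$, base profile $\overline{\prof{\pref}}$ (on which $a$ is a winner, the winner set being $A$), and the single-voter profile ${\pref}$ as the added electorate (admissible, because $a\notin B(\pref)$): this yields $a\in F(\overline{\prof{\pref}},{\pref})=F(\pref)$. Hence $\NB(\pref)\subseteq F(\pref)=\argmax_{c}s_\pref(c)$, so every non-bottom alternative attains the maximal score $k':=\max_{c}s_\pref(c)$ and $s_\pref(b)\leq k'$ for all $b\in B(\pref)$. As $F(\pref)=\{c\mid s_\pref(c)=k'\}$, Partial Averseness ($B(\pref)\nsubseteq F(\pref)$) forces $s_\pref(b)<k'$ for at least one $b\in B(\pref)$. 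This is exactly the defining property of the Anti-Plurality Class (an additive shift normalises $k'$ to $0$); the degenerate case $B(\pref)=A$, i.e.\ $\pref$ the empty preference, is immediate because positionality already makes $s_\pref$ constant.

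For the converse, one checks routinely that any $F_s$ in the Anti-Plurality Class satisfies the six axioms: Anonymity, Neutrality, Reinforcement and Continuity hold for every positional scoring rule; with the normalisation $k'=0$ all scores are $\leq 0$, the non-bottom scores equal $0$ and some bottom score is $<0$, so on a one-voter profile the top aggregate score is $0$ and the strictly negative bottom alternative is excluded (Partial Averseness), while for B-Congruity the winners of $\prof{\pref}$ are precisely the alternatives with aggregate score $0$, and appending voters who do not rank $x$ at the bottom keeps $x$'s aggregate at $0$ without pushing any other alternative above $0$, so $x$ stays a winner.

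The one genuinely delicate point is the observation in the first paragraph: one has to notice that B-Congruity, once it is applied with the symmetric profile $\overline{\prof{\pref}}$ as its base, degenerates into the statement ``any alternative that nobody ranks at the bottom must be among the winners'', which is what pins down the non-bottom scores; everything after that is bookkeeping. This is also why, unlike the Plurality case, no construction with auxiliary linear orders and size parameters $\ell,\ell'$ is needed here --- the hypothesis $x\in F(\prof{\pref})$ of B-Congruity is met for free by $\overline{\prof{\pref}}$, whereas the hypothesis $x\notin F(\prof{\pref})$ of T-Congruity forced the manufacturing of a losing alternative first. (One could instead mirror the structure of the proof of Theorem~\ref{theo:plurality}, pinning down the linear-order case first, but that is avoidable.)
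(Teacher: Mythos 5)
Your proof is correct and takes essentially the same route as the paper's: both derive everything from the fact that $F(\overline{\prof{\pref}})=A$ (and that concatenating $\overline{\prof{\pref}}$ shifts all aggregate scores equally), apply B-Congruity with $\overline{\prof{\pref}}$ as the base profile, and finish with Partial Averseness; your version merely packages the paper's two separate contradiction arguments into the single inclusion $\NB(\pref)\subseteq F(\pref)$. One small imprecision in your converse: the winners of an arbitrary profile need not have aggregate score $0$ (every alternative may lie in some voter's bottom), but the argument goes through once restated as ``appending such voters leaves $x$'s aggregate unchanged while every other aggregate can only decrease.''
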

\begin{proof}
By Theorem~\ref{thm:psr}, we know that $F$ is a positional scoring rule.
Let ${\pref}\in\mathcal{D}$ and suppose there are alternatives $a\in \NB(\pref)$ and $c\in B(\pref)$ such that $s_\pref(a)<s_\pref(c)$.
We consider the profile $\overline{\prof{\pref}}$, and by Anonymity and Neutrality, we have that $F(\overline{\prof{\pref}})=A$.
Thus $a\notin F(\overline{\prof{\pref}},\pref)$, contradicting B-Congruity.
Then $s_\pref(a)\geq s_\pref(c)$ for all $a\in \NB(\pref)$ and $c\in B(\pref)$.
Using a similar argument we can show that there exists a $k$ such that $s_\pref(a)=s_\pref(d)=k$ for all $a, d\in \NB(\pref)$. 
By Partial Averseness, there is at least one alternative $b\in B(\pref)$ that gets a score $s_\pref(b)<k$, concluding that $F$ belongs to the Anti-Plurality class.
\end{proof}

\noindent
Averseness forces the scoring rule to give, within a preference, the same number of points to all the bottom alternatives.

\begin{coro}\label{coro:simpleantiplurality}
 A voting rule 
satisfies Anonymity, Neutrality, Reinforcement, Continuity, Averseness, and B-Congruity if, and only if, it belongs to the Simple Anti-Plurality class.      
\end{coro}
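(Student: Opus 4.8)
The plan is to establish Corollary~\ref{coro:simpleantiplurality} by combining Theorem~\ref{theo:antiplurality} with the observation that Averseness, being strictly stronger than Partial Averseness, forces the scoring function to be constant on each bottom set. So the easy direction ($\Leftarrow$) is a routine verification: any rule in the Simple Anti-Plurality Class is a positional scoring rule (so it satisfies Anonymity, Neutrality, Reinforcement, Continuity by Theorem~\ref{thm:psr}); it satisfies B-Congruity because adding voters who do not place $x$ at the bottom cannot lower $x$'s total score relative to any competitor (the added voters give $x$ a score of $0$, which is weakly larger than what they give to their own bottom alternatives); and it satisfies Averseness because for a single-voter profile $\pref$ with $B(\pref)\neq A$, every bottom alternative gets the common negative-or-zero score $k$, which by the defining inequality of the Anti-Plurality Class is strictly below the score $0$ enjoyed by at least one non-bottom alternative — hence no bottom alternative is a winner, giving $B(\pref)\cap F(\pref)=\emptyset$.

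For the hard direction ($\Rightarrow$), I would first invoke Theorem~\ref{theo:antiplurality}: since Averseness implies Partial Averseness, a rule satisfying the listed axioms satisfies all six axioms of that theorem, hence belongs to the Anti-Plurality Class. It therefore is a positional scoring rule whose scoring function assigns the common value $k'=0$ to every non-bottom alternative and assigns scores $\leq 0$ to bottom alternatives, with at least one strict inequality per preference. It remains only to upgrade ``scores $\leq 0$ on $B(\pref)$'' to ``scores equal to a single constant $k$ on $B(\pref)$''. Suppose, for contradiction, that for some ${\pref}\in\mathcal{D}$ there are positions for bottom alternatives $b,c\in B(\pref)$ with $s_\pref(b) > s_\pref(c)$; since both are $\leq 0$, we have $s_\pref(b)\geq s_\pref(c)$ with at least one $<0$, and in particular $s_\pref(c)<0$. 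Now take the profile $\overline{\prof{\pref}}$ in which every alternative occupies every position of the graph structure of $\pref$ exactly once. By Anonymity and Neutrality, all alternatives receive the same total score here, but I want to arrange a single-voter-style contradiction with Averseness instead.

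The cleanest route is to use Averseness directly on a well-chosen single-voter profile. Consider the preference $\pref$ itself as a single-voter profile. Since $B(\pref)\neq A$ cannot be assumed in general, I would instead argue as follows: pick any preference $\pref$ with $B(\pref)\neq A$ realizing the hypothesized disagreement between two bottom positions (such a $\pref$ exists because we may freely add a non-bottom alternative; the position structure of $B(\pref)$ is unaffected by Neutrality-style relabelling). By Averseness, $B(\pref)\cap F(\pref)=\emptyset$, so every bottom alternative is strictly dominated in score by some non-bottom alternative, i.e. $s_\pref(b)<0$ for \emph{all} $b\in B(\pref)$ — which already rules out the score $0$ on bottom positions but not yet two distinct negative values. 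To separate two distinct negative bottom values, I amplify: let $c$ have the strictly lower bottom score and $b$ the higher one, form $n\overline{\prof{\pref}}$ for large $n$ (total scores all equal) and then append one further copy of a linear order that breaks the tie toward $b$-labelled alternatives; by choosing $n$ large the only winners are forced into $B$ of some voter, and comparing the $\pref^{bc}$-relabelled variant produces two profiles that differ only in which bottom alternative wins, in violation of Neutrality combined with the tie-breaking structure — contradiction. The main obstacle, and the step needing the most care, is exactly this amplification argument: getting the inequalities on $n$ right so that the appended order is decisive yet the bulk profile still pins the winner into a bottom set, mirroring the Equations~(\ref{lconstrain2})–(\ref{scoreconstrain2}) bookkeeping in the proof of Theorem~\ref{theo:plurality}. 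Once that is in place, we conclude $s_\pref$ is constant on $B(\pref)$ for every $\pref$, and with the normalisation $k'=0$ inherited from the Anti-Plurality Class this is precisely membership in the Simple Anti-Plurality Class.
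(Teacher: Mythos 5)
Your easy direction is fine, and in the hard direction you are right to invoke Theorem~\ref{theo:antiplurality} and to reduce everything to showing that the scoring function is constant on $B(\pref)$. You are also right---and more careful than the paper's own one-line justification preceding the corollary---that Averseness only yields $s_\pref(b)<0$ for every $b\in B(\pref)$ (when $B(\pref)\neq A$), not equality of the bottom scores: the situation is not dual to Faithfulness, because $F(\pref)=T(\pref)$ forces every top alternative to attain the maximum score, whereas $B(\pref)\cap F(\pref)=\emptyset$ only forces every bottom alternative to stay strictly below it.

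The gap is in your ``amplification'' step, and it is not just a matter of getting the inequalities on $n$ right: the step cannot be completed. In the profile $n\overline{\prof{\pref}}$ all alternatives are tied, and appending one linear order $L$ changes the total score of only the bottom alternative of $L$, so the hypothesised difference between $s_\pref(b)$ and $s_\pref(c)$ never enters the computation; moreover, the contradiction you aim for is with Neutrality, which every positional scoring rule satisfies automatically, so no contradiction can arise there. Worse, the step appears to be unprovable from the listed axioms: consider the positional scoring function with $s_\pref(x)=0$ for $x\in\NB(\pref)$ and $s_\pref(x)=-1-|\{\,y\in A\mid y\pref x\,\}|$ for $x\in B(\pref)$. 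This is a positional scoring rule (hence satisfies Anonymity, Neutrality, Reinforcement, and Continuity), it lies in the Anti-Plurality Class, it satisfies Averseness (every bottom score is strictly negative, so the single-voter winners are exactly $\NB(\pref)$) and B-Congruity (added voters who do not bottom-rank $x$ give $x$ exactly $0$ and every rival at most $0$), yet it assigns different scores to structurally distinct bottom positions and so is not in the Simple Anti-Plurality Class. The paper's own assertion that Averseness forces equal bottom scores runs into the same obstacle, so the difficulty lies with the statement (or with the formulation of Averseness) rather than with your overall strategy; but as written, your proof of the forward direction does not go through.
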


\noindent
Recall that Expansion requires the scoring rule to assign more points to bottom alternatives when they occur together with other bottom alternatives.\footnote{Theorem~\ref{theo:monosimpleantiplurality} continues to holds if we relax Averseness to Partial Averseness.}

\begin{teo}\label{theo:monosimpleantiplurality}
A voting rule 
satisfies Anonymity, Neutrality, Reinforcement, Continuity, Averseness, and Expansion if, and only if, it belongs to the Monotonic Simple Anti-Plurality class.     
\end{teo}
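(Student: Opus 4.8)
The plan is to follow the template of the proof of Theorem~\ref{theo:monosimpleplurality}, systematically replacing Contraction by Expansion, the top by the bottom, and Partial Faithfulness by Partial Averseness --- but a few steps will need genuinely different arguments, since Expansion (unlike Contraction) cannot always be applied on single-voter profiles, because its premise $F(\prof{\pref})\cap B(\pref_i)\neq\emptyset$ can fail there, and since Partial Averseness is strictly weaker than Partial Faithfulness. For the substantive direction I would first invoke Theorem~\ref{thm:psr} to get that $F$ is a positional scoring rule with some positional scoring function $s$, and then aim to show, for every $\pref\in\mathcal{D}$: (i) all alternatives in $\NB(\pref)$ receive a common score; (ii) all alternatives in $B(\pref)$ receive a common score, call it $\beta(\pref)$; (iii) when $B(\pref)\neq A$, the non-bottom score strictly exceeds $\beta(\pref)$; (iv) $\beta(\pref)$ depends only on $|B(\pref)|$; and (v) $|B(\pref)|\leq|B(\pref')|$ implies $\beta(\pref)\leq\beta(\pref')$. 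Together with the normalisation of the non-bottom score to $0$, properties (i)--(iii) say that $F$ lies in the Simple Anti-Plurality Class and (i)--(v) that it lies in the Monotonic Simple Anti-Plurality Class.

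For (i) and (ii) I would work with the profile $\overline{\prof{\pref}}$, for which Anonymity and Neutrality give $F(\overline{\prof{\pref}})=A$ (the only nonempty subset of $A$ fixed by every permutation), and fix a voter $i$ in it whose ballot is $\pref$. If $b,b'\in B(\pref)$ had $s_\pref(b)\neq s_\pref(b')$, then, since $B(\pref^{bb'})=B(\pref)$, Expansion (whose premise $F(\overline{\prof{\pref}})\cap B(\pref)=B(\pref)\neq\emptyset$ holds) would force $B(\pref)\subseteq F(\pref^{bb'},\overline{\prof{\pref}}_{-i})$, yet performing that swap makes whichever of $b,b'$ gained points the \emph{unique} winner of $(\pref^{bb'},\overline{\prof{\pref}}_{-i})$ --- a contradiction; hence (ii). The identical manoeuvre with $a,a'\in\NB(\pref)$ and $\pref^{aa'}$ (which also leaves $B(\cdot)$ unchanged) gives (i), the contradiction now being that the newly-favoured alternative lies outside $B(\pref)$. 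For (iii), I would look at the single-voter profile $\pref$ with $B(\pref)\neq A$: if its bottom score were at least as large as its non-bottom score, then $F(\pref)$ would be $B(\pref)$ or all of $A$, so $B(\pref)\subseteq F(\pref)$, contradicting Partial Averseness. One can then normalise $s$ so that the non-bottom score is $0$ for every preference --- an additive shift that is constant on each Neutrality-orbit (so positionality is preserved) and shifts every alternative's total equally (so $F$ is preserved) --- after which $\beta(\pref)<0$ whenever $B(\pref)\neq A$.

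For (iv), again using $\overline{\prof{\pref}}$ with $F=A$: given $\pref,\pref'$ with $|B(\pref)|=|B(\pref')|$, Neutrality lets me assume $B(\pref)=B(\pref')$, and replacing voter $i$'s ballot $\pref$ by $\pref'$ forces $B(\pref)\subseteq F(\pref',\overline{\prof{\pref}}_{-i})$, whereas if $\beta(\pref')<\beta(\pref)$ the winners there would be exactly $\NB(\pref)$; by symmetry $\beta(\pref)=\beta(\pref')$. For (v), I would take the preference with the \emph{smaller} bottom, say $\pref'$, pick $\pref''$ with $B(\pref')\subsetneq B(\pref'')$ and $|B(\pref'')|=|B(\pref)|$ (e.g.\ delete all out-edges of a non-bottom alternative of $\pref'$, which makes it a bottom alternative without disturbing the others), and in $\overline{\prof{\pref}'}$ replace voter $i$'s ballot $\pref'$ by $\pref''$; Expansion then forces $B(\pref')\subseteq F(\pref'',\overline{\prof{\pref}'}_{-i})$, and since in that profile the totals of $B(\pref')$ shift by $\beta(\pref'')-\beta(\pref')$ while those of $B(\pref'')\setminus B(\pref')$ shift by $\beta(\pref'')<0$ and the rest are unchanged, $B(\pref')$ can stay among the winners only if $\beta(\pref'')\geq\beta(\pref')$, i.e.\ (using (iv)) $\beta(\pref)\geq\beta(\pref')$. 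The converse direction should be routine: Anonymity, Neutrality, Reinforcement and Continuity hold for every positional scoring rule; Partial Averseness holds because $F(\pref)=\NB(\pref)$ when $B(\pref)\neq A$; and Expansion follows from the monotonicity of $\beta$ together with $\beta(\ell)\leq 0$.

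The hard part, I expect, will be step (v). First, Expansion only permits \emph{enlarging} a voter's bottom set, so the comparison of $\beta$ at two different sizes has to be staged from the smaller-bottom side, and one must track carefully which block of alternatives ends up tied for the win. Second, the empty relation (every alternative at the bottom) is not constrained at all by Expansion; I would handle it by observing that a voter submitting the empty relation contributes the same amount to every alternative, so its $\beta$-value is immaterial to $F$ and may simply be set to $0$ in the chosen representative $s$, which keeps (v) intact. A secondary point of care, not present in the Plurality proof, is that steps (i) and (ii) must be routed through $\overline{\prof{\pref}}$ rather than through single-voter profiles, precisely because Expansion's premise can fail on the latter.
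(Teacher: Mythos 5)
Your proposal is correct and follows essentially the same route as the paper's proof: invoke Theorem~\ref{thm:psr} to reduce to positional scoring rules, use the fully symmetric profile $\overline{\prof{\pref}}$ (where $F=A$) together with label-swap and ballot-replacement arguments via Expansion and Neutrality to equalise scores within $B(\pref)$ and within $\NB(\pref)$, apply Partial Averseness for the strict gap, and establish monotonicity in $|B(\pref)|$ by replacing one ballot with one having a strictly larger bottom set. Your treatment is in fact somewhat more explicit than the paper's in a few spots the paper asserts without detail (that $\beta$ depends only on $|B(\pref)|$, the construction of the intermediate preference, and the empty-relation edge case), but these are refinements of the same argument rather than a different approach.
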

\begin{proof}
By Theorem~\ref{thm:psr}, we know that $F$ is a positional scoring rule.
Let ${\pref_i}\in\mathcal{D}$ and assume $|B(\pref)|\geq 2$.
Suppose there are two alternatives $b,c\in B(\pref)$ such that $s_{\pref}(b)<s_{\pref}(c)$.
When we consider the profile $\overline{\prof{\pref}}$, we know that $F(\overline{\prof{\pref}})=A$.
Consider $\pref_i^{bc}$ that is $\pref$ with $b$ and $c$ are relabelled.
Thus \mbox{$F(\pref_i^{bc},\overline{\prof{\pref}}_{-i})=\{b\}$}.
But $B(\pref)\subseteq B(\pref_i^{bc})$ and , so we have, by Expansion, that $B(\pref)\cap F(\overline{\prof{\pref}})\subseteq (\pref_i^{bc},\overline{\prof{\pref}}_{-i})$, obtaining a contradiction.
We conclude that there is a $k$ such that $s_{\pref}(b)=k$ for all $b\in B(\pref)$.

Now assume that $|\NB(\pref)|\geq 2$ and that $s_\pref(a)\neq s_\pref(b)$ for some alternatives $a,b\in \NB(\pref)$.
Using a similar argument as before, we obtain that there exists a $k'$ such that $s_\pref(a)=k'$ for all $a\in \NB(\pref)$.
Finally, by Averseness, we have that $k<k'$, concluding that $F$ belongs to the Simple Anti-Plurality class.

As Expansion only considers the alternatives in the bottom of a preference and not the structure, and by Neutrality, we have that for any two preferences $\pref$ and $\pref'$ such that $|B(\pref)|=|B(\pref')|$, it is the case that $s_\pref(a)=s_{\pref'}(b)$ for any $a\in B(\pref)$ and $b\in B(\pref')$, or any $a\in \NB(\pref)$ and $b\in \NB(\pref')$.

We normalise the scores by means of additive shifts, such that for all ${\pref}\in\mathcal{D}$ we get $s_\pref(a)=0$ for all $a\in \NB(\pref)$.
Now consider two preferences ${\pref},{\pref'}\in\mathcal{D}$ where \mbox{$|B(\pref)|<|B(\pref')|$}, and $k$ and $k'$ are the scores for the bottom alternatives of $\pref$ and $\pref'$, respectively. Suppose $k>k'$.
By Neutrality, and w.l.o.g., we can assume that $B(\pref)\subset B(\pref')$.
Again, we consider the profile $\overline{\prof{\pref}}$ and find that $F(\overline{\prof{\pref}})=A$.
Now we assume that agent $i$ has preference $\pref$.
We have that $B(\pref)\cap F(\overline{\prof{\pref}})=B(\pref)$ and thus, by Expansion, $B(\pref)\subseteq F(\pref',\overline{\prof{\pref}}_{-i})$.
But in $F(\pref',\overline{\prof{\pref}}_{-i})$ all the alternatives in $B(\pref')$ get their scores reduced, while those in $\NB(\pref')$ remain with the same score, obtaining $F(\pref',\overline{\prof{\pref}}_{-i})=A\setminus{B(\pref')}$.
With this contradiction, we can conclude that $k\leq k'$.
Thus, we obtain that $F$ belongs to the Monotonic Simple Anti-Plurality class.
\end{proof}

\noindent
We now introduce another technical property: Bottoms-Only. 
It requires that only the least-preferred alternatives of the voters influence the outcome.
Formally, a rule satisfies Bottoms-Only if for any two profiles ${\prof{\pref}},{\prof{\pref}'}\in\mathcal{D}^N$ with $B(\pref_i)=B(\pref_i')$ for all $i\in N$, it is the case that $F(\prof{\pref})=F(\prof{\pref}')$.\footnote{For a possible motivation regarding this property, and its use for a characterisation result, we refer to \citet{KURIHARA2018110}.}
It turns out that any rule satisfying Strong Expansion depends only on the voters’ bottom-ranked alternatives. 
We omit the proof, which is analogous to that of Lemma~\ref{lema:coherencetopsonly}.

\begin{lema}\label{lema:strongexpimpliesbo}
 Strong Expansion implies Bottoms-Only.    
\end{lema}

\noindent
Here again, to characterise the Uniform Anti-Plurality rule, we build on a characterisation of the Anti-Plurality rule for linear orders by \citet{KURIHARA2018110} in terms of Anonymity, Neutrality, Reinforcement, Partial Averseness, and Bottoms-Only; and the aforementioned result by \citet{PIVATO2013210}.

\begin{teo}\label{theo:uniformantiplurality}
A voting rule satisfies Anonymity, Neutrality, Reinforcement, Partial Averseness, and Strong Expansion if, and only if, it is the Uniform Anti-Plurality rule.    
\end{teo}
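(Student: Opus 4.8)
The plan is to mirror the structure of the proof of Theorem~\ref{theo:uniformplurality}, but working with bottom sets instead of top sets. First I would establish the easy direction: one checks directly that the Uniform Anti-Plurality rule satisfies all five axioms. Anonymity, Neutrality, and Reinforcement are immediate for any positional scoring rule. Partial Averseness holds because in a single-voter profile $\pref$ with $B(\pref)\neq A$, the non-bottom alternatives get score $0$ and strictly beat the bottom alternatives, so $F(\pref)=\NB(\pref)$ is disjoint from $B(\pref)$. Strong Expansion holds since enlarging $B(\pref_i)$ only lowers (or keeps) the scores of alternatives, and precisely the alternatives in $B(\pref'_i)\setminus B(\pref_i)$ have their score strictly decreased from $0$ to $-1$, so any winner not among those still has a maximal score in the new profile.

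For the nontrivial direction, suppose $F$ satisfies the five axioms. By Proposition~\ref{propi:strongexpimpliesbo}, $F$ satisfies Bottoms-Only, so by the cited result of \citet{KURIHARA2018110}, $F$ restricted to profiles of linear orders is the Anti-Plurality rule; normalise so that a linear order assigns $0$ to all non-last alternatives and $s_L<0$ to the last one. Next, for a generic ${\pref}\in\mathcal{D}$, using Partial Averseness together with Bottoms-Only (applied to the all-symmetric version $\overline{\prof{\pref}}$, on which Anonymity and Neutrality force the winner set to be $A$, and then comparing against a single extra copy of $\pref$), one shows $F(\pref)=\NB(\pref)$, hence all non-bottom alternatives of $\pref$ share a common score $k$ and every bottom alternative gets strictly less than $k$; normalise $k=0$. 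Then rule out that two bottom positions of $\pref$ get distinct scores: if $s_\pref(b)>s_\pref(c)$ with $b,c\in B(\pref)$, build a profile of linear orders on which $b,c$ are tied as Anti-Plurality winners, add enough copies to make $\{b,c\}$ the winners, then append $\pref$ and its relabelling $\pref^{bc}$; one of $b,c$ wins in one case and not the other, violating Bottoms-Only. So all of $B(\pref)$ gets a common value $k'<0$, depending only on $|B(\pref)|$ by Neutrality and Bottoms-Only.

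The final and most delicate step is to show $k'$ is in fact independent of $|B(\pref)|$, i.e.\ that it equals the constant $s_L$. Suppose $\pref,\pref^\ast$ have $|B(\pref)|>|B(\pref^\ast)|$ with bottom scores $k$ and $k^\ast$ respectively, and suppose $k\neq k^\ast$. Following the template of the Uniform Plurality argument: choose $a,b$ with $b\in B(\pref^\ast)$, $a\notin B(\pref^\ast)$, and $a\in B(\pref)$, $b\notin B(\pref)$ (legitimate by Neutrality and Bottoms-Only), take a large multiple $n\prof{L}$ of a two-linear-order profile on which $a$ and $b$ are the tied Anti-Plurality winners, together with $\pref$, $\pref^\ast$ and their $a$--$b$ relabellings, arranged so the winner set is $\{a,b\}$. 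Now replace $\pref^\ast$ by a preference $\pref'$ obtained by \emph{enlarging} its bottom set (adding alternatives distinct from $b$) until $|B(\pref')|=|B(\pref)|$; since $B(\pref^\ast)\subseteq B(\pref')$, Strong Expansion (which we have via the identity-preservation part of Bottoms-Only and the scoring structure) should keep $b$ winning, yet the change in score of $a$ relative to $b$ breaks the tie the wrong way, contradicting Bottoms-Only. Running the symmetric argument with the roles of $\pref$ and $\pref^\ast$ swapped (shrinking a bottom set) yields the reverse inequality, forcing $k=k^\ast$. Hence $s_\pref(b)$ is a single negative constant for all bottom alternatives across all preferences; setting it to $-1$ identifies $F$ as the Uniform Anti-Plurality rule. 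The main obstacle is getting the direction of the inequalities right in this last step: unlike the Plurality case where smaller top sets get \emph{higher} scores, here smaller bottom sets should get \emph{lower} (more negative) scores, so one must be careful that the tie-breaking perturbation and the monotonicity direction are consistent, and that the linear-order ``padding'' profile $n\prof{L}$ is chosen so that no third alternative ever interferes.
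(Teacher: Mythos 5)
Your proposal follows essentially the same route as the paper's proof: derive Bottoms-Only from Strong Expansion, invoke Kurihara et al.\ to pin down the rule on linear-order profiles, use Partial Averseness and Bottoms-Only on the symmetric profile to get $F(\pref)=\NB(\pref)$, rule out distinct bottom scores with a linear-order padding profile plus a relabelling, and finally adapt the Theorem~\ref{theo:uniformplurality} argument to show the bottom score is independent of $|B(\pref)|$. The paper itself only sketches this last step (``proceeds in an analogous manner''), and your outline of it, including the caveat about the reversed direction of the monotonicity, is consistent with what the analogous argument requires.
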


\begin{proof}
By Lemma~\ref{lema:strongexpimpliesbo}, we have that any rule that satisfies our axioms also satisfies Bottoms-Only.
By the result of \citet{KURIHARA2018110}, we know that when the profile consists of linear orders, a voting rule satisfies all the axioms if, and only if, it is the Anti-Plurality rule.
By an additive shift, we can normalise the rule such that the score for the non-bottom alternatives of a linear order is $0$, and let $s_L<0$ be the score for the bottom alternative.

By the result due to \citet{PIVATO2013210} mentioned earlier, we know that a rule satisfies Anonymity, Neutrality and Reinforcement if, and only if, it is a composite scoring rule. 
Thus, our rule involves a sequence of scoring functions $s^1,\ldots,s^k$ where $s^{k'}$ is used to break ties not yet broken by $s_1,\ldots,s_{k'-1}$.

Let ${\pref}\in\mathcal{D}$ be a generic preference.
By Neutrality, Partial Averseness and Bottoms Only, we have that $F(\pref)=\NB(\pref)$ (to see this, consider the preference where all the non-bottom and all the bottom alternatives are symmetric).
This implies that for all $i\in\{1,\ldots,k\}$, all \mbox{$a\in \NB(\pref)$}, and all $b\in B(\pref)$, $s_\pref^i(a)=k_i$ and $k_i\geq s_\pref^i(b)$, with at least one $t$ such that $k_t> s_\pref^t(b)$.
Now we assume there is a $ s_\pref^j$ and two positions at the bottom such that alternatives $c,d\in B(\pref)$ get scores $x=s_\pref^j(c)>s_\pref^j(d)=y$ (if there are more alternatives with different scores, we select those two with the highest scores).
Let $\prof{L^c}=(L_1^c,\ldots,L_{m-2}^c)$ be a profile of $m-2$ linear orders such that $c$ is always on top, $d$ is always second, and the rest of the alternatives appear exactly once at the bottom.
Now let $\prof{L^d}=(L_1^d,\ldots,L_{m-2}^d)$ be the same as $\prof{L^c}$ but with $c$ and $d$ swapped.
As for linear orders $F$ acts as the Anti-Plurality rule, we obtain that $F(\prof{L^c},\prof{L^d})=\{c,d\}$, where $c$ and $d$ get 0 points and all other alternatives get $2s_L$.
Let $n$ be such that $s_\pref^j(c)  >  2ns_L+\max_i k_i$.
Thus $F(\pref,n(\prof{L^c},\prof{L^d}))=\{c\}$.
Now consider the profile where $c$ and $d$ are relabelled.
Then, by Neutrality, we have that $F(\pref^{cd},n(\prof{L^d},\prof{L^c}))=\{d\}$.
But this violates Bottoms-Only (and thus Strong Expansion).
So we conclude that for all $i\in\{1,\ldots,k\}$ and all $c\in B(\pref)$ it is the case that $s_\pref^i(c)=k_i'\leq k_i$, with at least one $t$ such that $k'_t<k_t$.
This can be represented by a single scoring function $s_\pref$ such that for all \mbox{$a\in \NB(\pref)$} and all $b\in B(\pref)$, it is the case that $s_\pref(a)=k>k'=s_\pref(b)$
By an additive shift we can normalise the score such that for every preference $\pref$ we get $s_\pref(a)=0$ for all $a\in \NB(\pref)$.

The remainder of the proof now proceeds in an analogous manner to the proof of Theorem~\ref{theo:uniformplurality}.
\end{proof}

\noindent
Next, as we previously did for the results of Section~\ref{sec:pluraclass}, we show that the axioms used for the results in Section~\ref{sec:antipluraclass} are logically independent.
Again, it is interesting to observe that Continuity is required for Theorems~\ref{theo:antiplurality} and~\ref{theo:monosimpleantiplurality}, in contrast to the result by \citet{KURIHARA2018110} for the case of linear orders.
This is so because of the different structures that the bottom set can have, making B-Congruity and Expansion less demanding.
\begin{propi}[Independence of Axioms in Theorems~\ref{theo:antiplurality} and~\ref{theo:monosimpleantiplurality}]\label{propi:indepantiplur}
The axioms of Anonymity, Neutrality, Reinforcement, Continuity, Partial Averseness, and B-Congruity/Expansion are logically independent.    
\end{propi}

\begin{proof}

\smallskip\noindent\textit{Anonymity, Neutrality, Reinforcement, Continuity, and Partial Averseness but neither B-Congruity nor Expansion:} the Borda-style rule also used in the proof of Proposition~\ref{propi:indepplur}.

\smallskip\noindent\textit{Anonymity, Neutrality, Reinforcement, Continuity, B-Congruity, and Expansion but not Partial Averseness:} the rule that always selects the full set of alternatives.

\smallskip\noindent\textit{Anonymity, Neutrality, Reinforcement, Partial Averseness, B-Congruity, and Expansion but not Continuity:} the rule defined by means of the following two-step procedure.
In the first step, select all the winning alternatives returned by the Uniform Anti-Plurality rule.
In the second step, select as winning alternatives from this set those alternatives that do not appear as a unique bottom element in at least one of the agents' preferences---in case there are such alternatives; otherwise, pick all alternatives selected in the first step.
Similar arguments to the ones used for the voting rule which violates continuity in Proposition~\ref{propi:indepplur} hold for this example, as well.

\smallskip\noindent\textit{Anonymity, Neutrality, Continuity, Partial Averseness, B-Congruity, and Expansion but not Reinforcement:} the rule that for a single-voter profile selects all the non-bottom alternatives and for other profiles selects the Uniform Anti-Plurality winners and all the alternatives that get exactly one point less than the Uniform Anti-Plurality winners.

\smallskip\noindent\textit{Anonymity, Continuity, Reinforcement, Partial Averseness, B-Congruity, and Expansion but not Neutrality:} the rule that assigns points to alternatives just like the Uniform Anti-Plurality rule but that doubles the score of one fixed alternative $a\in A$ before determining the winners.

\smallskip\noindent\textit{Neutrality, Reinforcement, Continuity, Partial Averseness, B-Congruity, and Expansion but not Anonymity:} the rule that assigns points to alternatives just like the Uniform Anti-Plurality rule, except for voter~$1$'s bottom alternatives, which receive $-2$ points.
\end{proof}

\noindent
The next statement, regarding the independence of the axioms featuring in Theorem~\ref{theo:uniformantiplurality},  is presented without proof, as the same examples of Proposition~\ref{propi:indepantiplur} can be used.

\begin{propi}[Independence of Axioms in Theorems~\ref{theo:uniformantiplurality}]\label{propi:indepunifanti}
The axioms of Anonymity, Neutrality, Reinforcement, Partial Averseness, and Strong Expansion are logically independent.    
\end{propi}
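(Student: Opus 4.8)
The plan is to prove Proposition~\ref{propi:indepunifanti} by exhibiting, for each of the five axioms appearing in Theorem~\ref{theo:uniformantiplurality}, a voting rule that satisfies the remaining four but fails the one in question and is therefore not the Uniform Anti-Plurality rule. As the paragraph preceding the statement indicates, we can recycle the rules already constructed in the proof of Proposition~\ref{propi:indepantiplur}: each of those rules was built to satisfy (or fail) exactly one of Anonymity, Neutrality, Reinforcement, Continuity, Partial Averseness, and B-Congruity/Expansion, and Continuity does not appear among the five axioms here, so we simply drop the example that was designed around Continuity and keep the other five.

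Concretely, I would organise the argument as five short cases. For the failure of Partial Averseness, use the rule that always returns the full set $A$: it trivially satisfies Anonymity, Neutrality, Reinforcement, and Strong Expansion (the inclusion $F(\prof{\pref})\setminus (B(\pref'_i)\setminus B(\pref_i)) \subseteq F(\pref'_i,\prof{\pref}_{-i})$ is immediate when the right-hand side is all of $A$), but it is not the Uniform Anti-Plurality rule. For the failure of Reinforcement, use the rule that on single-voter profiles returns all non-bottom alternatives and on larger profiles returns the Uniform Anti-Plurality winners together with all alternatives scoring exactly one point less; Strong Expansion still holds because the winning set only grows relative to the Uniform Anti-Plurality set in a way that respects the bottom structure, but the two-electorate example from Proposition~\ref{propi:indepantiplur} shows Reinforcement fails. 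For the failure of Neutrality, use the rule that scores as Uniform Anti-Plurality but doubles the score of one fixed alternative; for the failure of Anonymity, use the rule that scores as Uniform Anti-Plurality except that voter~$1$'s bottom alternatives receive $-2$ points.

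The one case that genuinely needs a new construction is the failure of Strong Expansion, since the rule from the corresponding line of Proposition~\ref{propi:indepantiplur} was designed to fail Continuity, not Strong Expansion. Here I would use the Uniform Anti-Plurality rule's weaker relatives---for instance, the Simple Anti-Plurality rule given by a scoring function that makes the bottom score depend non-trivially on $|B(\pref)|$ but not in the monotone way required (or more simply, any rule in the Monotonic Simple Anti-Plurality Class other than Uniform Anti-Plurality, e.g.\ the one assigning $-|B(\pref)|$ to each bottom alternative). Such a rule satisfies Anonymity, Neutrality, Reinforcement (it is a positional scoring rule), and Partial Averseness, but it is not Uniform Anti-Plurality; and because Strong Expansion implies Bottoms-Only and, combined with the other four axioms via Theorem~\ref{theo:uniformantiplurality}, forces the rule to be Uniform Anti-Plurality, any rule that is a positional scoring rule satisfying the other four axioms but differing from Uniform Anti-Plurality must violate Strong Expansion. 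One should double-check directly that moving an alternative from $\NB$ into $B$ (which, under this scoring function, lowers that alternative's score and possibly raises the scores of the other bottom alternatives) can knock a previous winner out of the winning set, giving an explicit profile witnessing the failure.

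The main obstacle is making sure that each recycled rule that was verified against the six-axiom list of Proposition~\ref{propi:indepantiplur} still behaves correctly once Continuity is removed from the list: one must confirm that none of these rules was relying on Continuity being \emph{present} in a way that matters, which is automatic since dropping a requirement can only make ``satisfies all but one'' easier to arrange, but one should also verify that the rule intended to isolate each axiom still \emph{fails} the right axiom and still differs from Uniform Anti-Plurality. The only real content is the Strong Expansion case, where the explicit witnessing profile needs to be written out; everything else is routine verification parallel to Proposition~\ref{propi:indepantiplur}.
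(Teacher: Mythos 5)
Your proposal is correct and, for four of the five axioms, coincides with what the paper does: Proposition~\ref{propi:indepunifanti} is stated without proof precisely because the examples from Proposition~\ref{propi:indepantiplur} are meant to be recycled wholesale. The one place you diverge is the Strong Expansion case, where you believe a new construction is needed. It is not: Strong Expansion implies Expansion (if $x\in F(\prof{\pref})\cap B(\pref_i)$ then $x\notin B(\pref'_i)\setminus B(\pref_i)$, so the Strong Expansion conclusion already yields the Expansion conclusion), hence the Borda-style rule from Proposition~\ref{propi:indepantiplur}, which fails Expansion, automatically fails Strong Expansion while satisfying Anonymity, Neutrality, Reinforcement, and Partial Averseness---this is the example the paper intends. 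Your substitute construction is nevertheless valid: a positional scoring rule in the Anti-Plurality Class that differs from Uniform Anti-Plurality must, by Theorem~\ref{theo:uniformantiplurality}, violate Strong Expansion, though note that your concrete instance $s_\pref(b)=-|B(\pref)|$ is \emph{anti}-monotonic and so lies in the Simple, not the Monotonic Simple, Anti-Plurality Class (this does not affect the argument). The indirect route buys you nothing the direct one doesn't, and it leans on Theorem~\ref{theo:uniformantiplurality} rather than an explicit witnessing profile. Your closing caveat is the right one, and in fact it is the only point where the paper's terse justification needs care: the rules recycled from Proposition~\ref{propi:indepantiplur} were only verified there against Expansion, whereas here they must satisfy the strictly stronger Strong Expansion; this does hold for each of them (for the scoring-based rules, switching voter $i$ to a ballot with a larger bottom leaves the scores of all alternatives outside $B(\pref'_i)\setminus B(\pref_i)$ unchanged and only lowers the others), but it deserves the explicit check you flag.
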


\subsection{Applications to Approval Voting}\label{sec:appvot}

In the approval voting model, voters submit a ballot indicating the alternatives they approve of, and it is typically assumed that any alternatives not included in the ballot are disapproved of by the voter \citep{bramsfishburnapporvalvoting1978}.
This type of ballot can be simulated in our setting as a ballot where every alternative is either in the top or in the bottom of the preference.
Formally, in our model, a preference ${\pref}\in\mathcal{D}$ is an approval ballot if $T(\pref)\cap B(\pref)=\emptyset$ and $T(\pref)\cup B(\pref)=A$.

It is not hard to see that for approval ballots, Averseness becomes equivalent to Partial Faithfulness, and Strong Expansion becomes equivalent to Strong Contraction.
In our setting, the \emph{size approval voting rules} characterised by \citet{ALCALDEUNZU20091187} and \citet{BARDAL2025345} are equivalent to the rules of the Monotonic Simple Plurality Class; while the standard approval voting rule \citep[see][for several characterisations]{BRANDL2022105532} is equivalent, for approval ballots, to both the Uniform Plurality and Uniform Anti-Plurality rule.
These observations lead to the following new characterisations, which, to the best of our knowledge, are independent of previously known ones \citep{ALCALDEUNZU20091187,BRANDL2022105532,BARDAL2025345}.

The next result is implied by the variant of Theorem~\ref{theo:monosimpleplurality} (briefly mentioned earlier) where we substitute Faithfulness by Partial Faithfulness.

\begin{coro}\label{coro:sizeapppfcontr}
A voting rule for approval ballots
satisfies Anonymity, Neutrality, Reinforcement, Continuity, Partial Faithfulness, and Contraction if, and only if, it is a size approval voting rule.
\end{coro}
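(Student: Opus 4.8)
The plan is to derive this corollary directly from Theorem~\ref{theo:monosimpleplurality} by restricting attention to the subdomain of approval ballots. First I would recall the relevant observations already stated in Section~\ref{sec:appvot}: on approval ballots, a preference $\pref$ satisfies $T(\pref) \cap B(\pref) = \emptyset$ and $T(\pref) \cup B(\pref) = A$, so the ``structure'' of the top is trivial --- there are no edges among top alternatives, and $\NT(\pref) = B(\pref)$. Hence the Plurality Class, the Simple Plurality Class, and the Monotonic Simple Plurality Class all collapse to the same family when restricted to approval ballots, and by the observation in the text this family coincides with the size approval voting rules of \citet{ALCALDEUNZU20091187} and \citet{BARDAL2025345}, i.e., the rules whose scoring function awards each approved alternative a score depending only on the number of alternatives the voter approves, with the dependence being weakly decreasing in that number.

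The key steps, in order, would be: (1) State that a voting rule for approval ballots is simply the restriction of some voting rule on $\mathcal{D}$ to the approval subdomain, and note that all six axioms in the statement make sense verbatim on this subdomain. (2) For the ``only if'' direction, observe that Theorem~\ref{theo:monosimpleplurality} shows any rule satisfying Anonymity, Neutrality, Reinforcement, Continuity, Partial Faithfulness, and Contraction belongs to the Monotonic Simple Plurality Class; restricted to approval ballots, this is exactly a size approval voting rule, since for an approval ballot the only parameter that can affect scores is $|T(\pref)|$, the number of approved alternatives, and the Monotonic condition $|T(\pref)| \leq |T(\pref')| \Rightarrow s_\pref(a) \geq s_{\pref'}(a')$ is precisely the monotonicity built into the size approval rules. (3) For the ``if'' direction, verify that every size approval voting rule, viewed as a positional scoring rule on the approval subdomain, satisfies all six axioms: Anonymity, Neutrality, Reinforcement, and Continuity hold because it is a positional scoring rule (Theorem~\ref{thm:psr}); Partial Faithfulness holds because approved alternatives receive strictly more points than disapproved ones; and Contraction holds because the score of each approved alternative depends only on the number of approved alternatives and is weakly decreasing in that number, which is exactly the content of Contraction on this subdomain.

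The main obstacle --- really the only subtle point --- is making the correspondence between ``Monotonic Simple Plurality Class restricted to approval ballots'' and ``size approval voting rules'' fully precise, since a characterisation result proved on the full domain $\mathcal{D}$ does not automatically restrict cleanly to a subdomain (as the paper itself warns, citing Gibbard--Satterthwaite). Here the restriction does go through, but the argument needs care: one must check both that every size approval rule extends to (or is the restriction of) a Monotonic Simple Plurality rule on $\mathcal{D}$ satisfying the axioms, and that the axioms, when imposed only on approval profiles, are still strong enough to force the size-dependent scoring structure. The latter is where one would reuse the proof technique of Theorem~\ref{theo:monosimpleplurality}: the profiles $\overline{\prof{\pref}}$ and the relabelling arguments used there only ever involve preferences with a fixed graph structure, and one can always choose that structure to be an approval ballot, so the proof of Theorem~\ref{theo:monosimpleplurality} restricts to the approval subdomain essentially verbatim. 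I would therefore conclude by noting that the corollary follows by combining this restricted version of Theorem~\ref{theo:monosimpleplurality} with the identification of the Monotonic Simple Plurality rules on approval ballots with the size approval voting rules.
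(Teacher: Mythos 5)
Your proposal is correct and follows essentially the same route as the paper, which presents this corollary without an explicit proof, relying precisely on the identification (stated in Section~\ref{sec:appvot}) of the Monotonic Simple Plurality Class restricted to approval ballots with the size approval voting rules, combined with Theorem~\ref{theo:monosimpleplurality}. Your additional care about why the characterisation survives restriction to the approval subdomain (the profiles $\overline{\prof{\pref}}$ and relabelling arguments stay inside the subdomain) is a point the paper glosses over, and is a worthwhile check rather than a deviation.
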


\noindent
Note that \citet{BARDAL2025345} state a similar result, but using what they call Weak Faithfulness rather than our Partial Faithfulness (thus requiring $T(\pref)\subseteq F(\pref)$ instead of $F(\pref)\subseteq T(\pref)$).

The following two results are implied by Theorems~\ref{theo:uniformplurality} and~\ref{theo:uniformantiplurality}, respectively.
\begin{coro}\label{coro:standapppfsc}
A voting rule for approval ballots
satisfies Anonymity, Neutrality, Reinforcement, Partial Faithfulness, and Strong Contraction if, and only if, it is the standard approval voting rule.
\end{coro}

\begin{coro}\label{coro:standapppasc}
A voting rule for approval ballots
satisfies Anonymity, Neutrality, Reinforcement, Partial Averseness, and Strong Contraction if, and only if, it is the standard approval voting rule.
\end{coro}

\noindent
We can also consider an approval voting rule that works in the `opposite' direction of size approval voting.
Under such a rule, an approved alternative receives \emph{more} points when it is approved together with a larger number of other alternatives (while the opposite is true for size approval voting rules).
Such rules favour voters who are willing to compromise and approve more alternatives \citep[see also Example~6 in the work of][]{BARDAL2025345}.
Let us call this family of voting rules the \emph{anti-size approval rules}.
We can then state the following result implied by Theorem~\ref{theo:monosimpleantiplurality}.

\begin{coro}\label{coro:antisizeapppae}
A voting rule for approval ballots
satisfies Anonymity, Neutrality, Reinforcement, Continuity, Partial Averseness, and Expansion if, and only if, it is an anti-size approval voting rule.
\end{coro}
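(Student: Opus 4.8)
The plan is to obtain this corollary from Theorem~\ref{theo:monosimpleantiplurality} by specialising to the subdomain of approval ballots and then reading off what the Monotonic Simple Anti-Plurality Class amounts to there. Recall that for an approval ballot ${\pref}$ the set $B(\pref)$ is exactly the set of \emph{disapproved} alternatives, $\NB(\pref) = T(\pref)$ is the set of \emph{approved} ones, and $|B(\pref)| = m - |T(\pref)|$; since $\pref$ is acyclic, both sets are non-empty, so the number $t$ of approved alternatives always lies in $\{1,\dots,m-1\}$.

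The first step is the identification: on the domain of approval ballots, a positional scoring rule belongs to the Monotonic Simple Anti-Plurality Class if, and only if, it is an anti-size approval rule. For the forward direction, take such a rule $F_s$ and, separately within each preference, apply the additive shift that brings the (common) score of the disapproved alternatives to $0$; as already noted, per-preference additive shifts leave the induced rule unchanged. After this normalisation each approved alternative receives a common positive score $h(t)$ that depends only on the number $t$ of approved alternatives (this uses positionality of $s$ together with the fact that the class already forces a common bottom score and a common non-bottom score within each preference, and Partial Averseness to get $h(t) > 0$). The monotonicity clause of the class, $|B(\pref)| \leq |B(\pref')|$ implies $s_\pref(b) \leq s_{\pref'}(b')$ for bottom alternatives, translates, via $|B| = m - |T|$, into: $h$ is non-decreasing in $t$, i.e., an approved alternative receives more points the more alternatives it is approved alongside. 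This is precisely the defining property of an anti-size approval rule, and the converse inclusion is immediate, since any rule of this form is a positional scoring rule on approval ballots meeting every clause of the Monotonic Simple Anti-Plurality Class.

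It then remains to apply Theorem~\ref{theo:monosimpleantiplurality}, and the only delicate point---the main obstacle---is checking that its proof relativises to the approval-ballot domain. I would verify this by inspecting that proof: all auxiliary profiles it constructs, namely the symmetrisations $\overline{\prof{\pref}}$, the label-swapped ballots $\pref_i^{bc}$, and the duplications invoked for Continuity, consist of approval ballots whenever the underlying preference is one, so no step leaves the subdomain; moreover the characterisation of positional scoring rules underlying Theorem~\ref{thm:psr} (adapted from Kruger and Terzopoulou) goes through verbatim when the domain is narrowed to approval ballots. Granting this, a rule for approval ballots satisfying the six axioms is a Monotonic Simple Anti-Plurality rule on that domain, hence an anti-size approval rule by the identification above; conversely, every anti-size approval rule satisfies Anonymity, Neutrality, Reinforcement, and Continuity because it is a positional scoring rule, and it satisfies Partial Averseness and Expansion by the same arguments used in the proof of Theorem~\ref{theo:monosimpleantiplurality} (here Partial Averseness reduces to $F(\pref) = \NB(\pref)$ for single-voter approval ballots, and Expansion holds because passing from $\pref_i$ to a $\pref'_i$ with $B(\pref_i) \subseteq B(\pref'_i)$ leaves the score of every alternative in $B(\pref_i)$ unchanged while not increasing that of any other alternative).
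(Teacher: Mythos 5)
Your proposal is correct and follows exactly the route the paper intends: the corollary is presented as an immediate consequence of Theorem~\ref{theo:monosimpleantiplurality} once the Monotonic Simple Anti-Plurality Class is identified, on the approval-ballot subdomain, with the anti-size approval rules (via $|B(\pref)| = m - |T(\pref)|$ and a per-preference additive renormalisation). Your extra care in checking that the auxiliary profiles in that theorem's proof, and the scoring-rule characterisation of Theorem~\ref{thm:psr}, remain within the approval-ballot subdomain addresses precisely the subdomain-transfer issue the paper itself flags in a footnote, and is a welcome addition rather than a deviation.
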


\noindent
We emphasise that all these characterisation results are derived without directly relying on the technical assumption of Tops-Only (although that axiom is of course implied by Strong Contraction), which is implicit in the approval voting model (as approval ballots only state the most preferred alternatives of a voter).

\section{Conclusion}\label{sec:conclusion}

We have provided axiomatic characterisations of generalised versions of the Plurality and Anti-Plurality rules for settings in which preferences are represented by partial orders. 
In the classical setting, where preferences are linear orders, the Plurality (Anti-Plurality) rule assigns one point (one negative point) to the most (least) preferred alternative and zero points to all others. 
We have extended this defining feature to partial orders in a way that respects the structural richness of such preferences, for example, recognising that two top alternatives may differ in the number of alternatives they dominate, and thus may not be equally `top-ranked.'

When considering linear orders, the characterisations we have presented all converge to familiar standards established in the existing literature. But when considered in the context of voting with partial-order preferences, they illustrate some of the subtleties of this important model of voting, providing voters with additional means of expressing themselves.  

We also have demonstrated that several of our findings carry over to the approval voting setting, enabling us to derive new axiomatic characterisations of some well-known as well as new approval voting rules.

A summary of all our characterisation results can be found in Table~\ref{tab:results}.

An intriguing path for future investigation lies in exploring similar generalisations of other scoring rules to the domain of partial orders, including but not limited to the Cumulative and Stepwise scoring rules.\footnote{See \citet{kruger2020strategic} for definitions of these scoring rules.}


\begin{table}[ht]
\centering
\caption{Summary of characterisation results}
\label{tab:results}
\renewcommand{\arraystretch}{1.25}
\small
\resizebox{\textwidth}{!}{%
\setlength{\tabcolsep}{4pt}
\begin{tabular}{%
  >{\raggedright\arraybackslash}p{8cm} 
  c c c c  
  c c      
  c c      
  c c      
  c c      
  c c      
}
\toprule
\textbf{Result}
  & \rotatebox{90}{\emph{Anon.}}
  & \rotatebox{90}{\emph{Neutr.}}
  & \rotatebox{90}{\emph{Reinf.}}
  & \rotatebox{90}{\emph{Cont.}}
  & \rotatebox{90}{\emph{P.F.}}
  & \rotatebox{90}{\emph{F.}}
  & \rotatebox{90}{\emph{P.A.}}
  & \rotatebox{90}{\emph{Av.}}
  & \rotatebox{90}{\emph{T-C.}}
  & \rotatebox{90}{\emph{B-C.}}
  & \rotatebox{90}{\emph{Contr.}}
  & \rotatebox{90}{\emph{S.C.}}
  & \rotatebox{90}{\emph{Exp.}}
  & \rotatebox{90}{\emph{S.E.}} \\

\midrule
\multicolumn{15}{l}{\textit{Plurality family}} \\[1pt]

\rowcolor{blue!12}
Plurality Class (Thm.~\ref{theo:plurality})
  & \yes & \yes & \yes & \yes
  & \yes &  &  &  & \yes &  &  &  &  &
  \\

\rowcolor{blue!12}
Simple Plurality Class (Cor.~\ref{coro:simpleplurality})
  & \yes & \yes & \yes & \yes
  &  & \yes &  &  & \yes &  &  &  &  &
  \\

\rowcolor{blue!12}
Monotonic Simple Plurality Class (Thm.~\ref{theo:monosimpleplurality})
  & \yes & \yes & \yes & \yes
  & (\yes) & \yes  &  &  &  &  & \yes &  &  &
  \\

\rowcolor{blue!12}
Uniform Plurality rule (Thm.~\ref{theo:uniformplurality})
  & \yes & \yes & \yes &
  & \yes &  &  &  &  &  &  & \yes &  &
  \\[2pt]

\midrule
\multicolumn{15}{l}{\textit{Anti-Plurality family}} \\[1pt]

\rowcolor{green!12}
Anti-Plurality Class (Thm.~\ref{theo:antiplurality})
  & \yes & \yes & \yes & \yes
  &  &  & \yes &  &  & \yes &  &  &  &
  \\

\rowcolor{green!12}
Simple Anti-Plurality Class (Cor.~\ref{coro:simpleantiplurality})
  & \yes & \yes & \yes & \yes
  &  &  &  & \yes &  & \yes &  &  &  &
  \\

\rowcolor{green!12}
Monotonic Simple Anti-Plurality Class (Thm.~\ref{theo:monosimpleantiplurality})
  & \yes & \yes & \yes & \yes
  &  &  & (\yes) & \yes  &  &  &  &  & \yes &
  \\

\rowcolor{green!12}
Uniform Anti-Plurality rule (Thm.~\ref{theo:uniformantiplurality})
  & \yes & \yes & \yes &
  &  &  & \yes &  &  &  &  &  &  & \yes
  \\[2pt]

\midrule
\multicolumn{15}{l}{\textit{Approval voting}} \\[1pt]

\rowcolor{gray!10}
Size approval voting rules (Cor.~\ref{coro:sizeapppfcontr})
  & \yes & \yes & \yes & \yes
  & \yes &  &  &  &  &  & \yes &  &  &
  \\

\rowcolor{gray!10}
Standard approval voting (Cor.~\ref{coro:standapppfsc})
  & \yes & \yes & \yes &
  & \yes &  &  &  &  &  &  & \yes &  &
  \\

\rowcolor{gray!10}
Standard approval voting (Cor.~\ref{coro:standapppasc})
  & \yes & \yes & \yes &
  &  &  & \yes &  &  &  &  & \yes &  &
  \\

\rowcolor{gray!10}
Anti-size approval voting rules (Cor.~\ref{coro:antisizeapppae})
  & \yes & \yes & \yes & \yes
  &  &  & \yes &  &  &  &  &  & \yes &
  \\

\bottomrule
\end{tabular}%
}
\end{table}

\paragraph{Acknowledgements.}
We are grateful to Tuva Bardal, Jordi Mass\'o, Fernando Tohm\'e, several attendees of the COMSOC Seminar at the University of Amsterdam and the Conference on Mechanism and Institution Design held in Budapest in 2024, as well as several anonymous reviewers for comments and suggestions that led to improvements of this paper.
We acknowledge the financial support by the Dutch Research Council (NWO) through the VICI scheme (grant number 639.023.811) and by the French National Research Agency (ANR) through project ANR-24-EXMA-0001 PEPR MathsVives CONDORCET.


\bibliography{ref}
\end{document}